\def\beq{\begin{equation}}
\def\eeq{\end{equation}}
\def\beqa{\begin{eqnarray}}
\def\eeqa{\end{eqnarray}}
\def\beqan{\begin{eqnarray*}}
\def\eeqan{\end{eqnarray*}}
\def\R{{\mathbb{R}}}
\def\argmin{\mathop{\mathrm{arg\,min}}}
\def\argmax{\mathop{\mathrm{arg\,max}}}
\def\diag{\mathop{\mathrm{diag}}}
\def\Diag{\mathop{\mathrm{Diag}}}
\def\x{\times}
\newtheorem{theorem}{Theorem}
\newtheorem{lemma}{Lemma}
\newtheorem{assumption}{Assumption}
\def\xhat{\widehat{x}}
\def\arr{\rightarrow}
\def\Exp{\mathbb{E}}
\def\var{\mathop{\mathrm{var}}}
\def\Cov{\mathop{\mathrm{Cov}}}
\def\Tr{\mathop{\mathrm{tr}}}
\def\diag{\mbox{diag}}
\def\tm1{t\! - \! 1}
\def\tp1{t\! + \! 1}
\def\km1{k\! - \! 1}
\def\kp1{k\! + \! 1}
\newcommand{\bbf}{\mathbf{b}}
\newcommand{\dbf}{\mathbf{d}}
\newcommand{\ebf}{\mathbf{e}}
\newcommand{\gbf}{\mathbf{g}}
\newcommand{\pbf}{\mathbf{p}}
\newcommand{\qbf}{\mathbf{q}}
\newcommand{\rbf}{\mathbf{r}}
\newcommand{\sbf}{\mathbf{s}}
\newcommand{\ubf}{\mathbf{u}}
\newcommand{\vbf}{\mathbf{v}}
\newcommand{\wbf}{\mathbf{w}}
\newcommand{\xbf}{\mathbf{x}}
\newcommand{\xbfhat}{\widehat{\mathbf{x}}}
\newcommand{\ybf}{\mathbf{y}}
\newcommand{\zbf}{\mathbf{z}}
\newcommand{\zbfhat}{\widehat{\mathbf{z}}}
\newcommand{\Abf}{\mathbf{A}}
\newcommand{\Ibf}{\mathbf{I}}
\newcommand{\Jbf}{\mathbf{J}}
\newcommand{\Pbf}{\mathbf{P}}
\newcommand{\Qbftilde}{\widetilde{\mathbf{Q}}}
\newcommand{\Qbf}{\mathbf{Q}}
\newcommand{\Sbf}{\mathbf{S}}
\newcommand{\Ybf}{\mathbf{Y}}
\def\betabf{{\boldsymbol \beta}}
\def\gammabf{{\boldsymbol \gamma}}
\def\Gammabf{{\boldsymbol \Gamma}}
\def\etabf{{\boldsymbol \eta}}
\def\etabfhat{\widehat{\boldsymbol \eta}}
\newcommand{\tran}{^{\text{\sf T}}}
\def\gt{\widetilde{\gbf}}
\newcommand*\dif{\mathop{}\!\mathrm{d}} 
\newcommand{\Hessian}[1]{\boldsymbol{\mathcal{H}}_{#1}} 
\newcommand{\vmult}{.}
\newcommand{\vdiv}{./}
\tikzstyle{block}=[rectangle,draw, fill=blue!20,
\tikzstyle{signal}=[coordinate,draw]
\title{Expectation Consistent Approximate Inference:  Generalizations and Convergence}
   \author{
     \IEEEauthorblockN{
        Alyson Fletcher,\IEEEauthorrefmark{1}
        Mojtaba Sahraee-Ardakan,\IEEEauthorrefmark{2}
        Sundeep Rangan,\IEEEauthorrefmark{3}
        and
        Philip Schniter\IEEEauthorrefmark{4}
     }
      \IEEEauthorblockA{
        \IEEEauthorrefmark{1}UCLA, akfletcher@ucla.edu,
        \IEEEauthorrefmark{2}UCSC, msahraee@ucsc.edu
     }
     \IEEEauthorblockA{
        \IEEEauthorrefmark{3}NYU, srangan@nyu.edu,
         \IEEEauthorrefmark{4}The Ohio State Univ., schniter@ece.osu.edu
     }
   }
  \author{
    Alyson K. Fletcher, \IEEEmembership{Member,~IEEE},
    Mojtaba Sahraee-Ardakan, \IEEEmembership{Student Member,~IEEE},\\
    Sundeep Rangan, \IEEEmembership{Fellow,~IEEE},
    and
    Philip Schniter, \IEEEmembership{Fellow,~IEEE}
   \thanks{A.~Fletcher and M. Saharee (email: akfletcher@ucla.edu,
       msahraee@ucsc.edu) are with
       the Department of Statistics and Electrical Engineering,
       University of California, Los Angeles.  Their work
       is supported in part by the National Science
       Foundation under Grant No.~1254204 and the Office of Naval Research
       Grant No.~N00014-15-1-2677.}
     \thanks{S. Rangan (email: srangan@nyu.edu) is with
           the Department of Electrical and Computer Engineering,
           New York University, Brooklyn, NY.
       His work was supported by the National Science
       Foundation under Grant No. 1116589 and the industrial affiliates of NYU
       WIRELESS.}
     \thanks{P.~Schniter (email: schniter@ece.osu.edu) is with
           the Department of Electrical and Computer Engineering,
           The Ohio State University.
       His work was supported in part by
       the National Science Foundation under Grants CCF-1218754 and CCF-1527162.}
   }
\begin{document}
\setlength{\arraycolsep}{0.8mm}

\maketitle
\begin{abstract}
Approximations of loopy belief propagation, including expectation propagation
and approximate message passing, have attracted considerable attention for
probabilistic inference problems.  This paper proposes and analyzes
a generalization of Opper and Winther's
expectation consistent (EC) approximate inference method.
The proposed method, called Generalized Expectation Consistency (GEC),
can be applied to both maximum a posteriori (MAP) and minimum mean squared error (MMSE)
estimation.
Here we characterize its fixed points, convergence,
and performance relative to the replica prediction of optimality.\!
\end{abstract}

\begin{IEEEkeywords}
Expectation propagation,
Approximate message passing,
Bethe free energy,
S-transform in free probability
\end{IEEEkeywords}

\section{Introduction}

Consider the problem of estimating a random vector $\xbf \in \R^N$ from
observations $\ybf\in\R^M$ under the posterior density
\beq \label{eq:pxy}
    p(\xbf|\ybf) = Z^{-1} \exp\left[ -f_1(\xbf) -f_2(\xbf) \right],
\eeq
where
$Z = \int \exp\left[ -f_1(\xbf) -f_2(\xbf) \right] \dif\xbf$
is a normalization constant%
\iftoggle{conference}{}{sometimes called the \emph{partition function}}
and $f_i(\xbf)$ are penalty functions.
Although both $Z$ and the penalties $f_i$ may depend on $\ybf$,
our notation suppresses this dependence.
We are interested in two problems:
\begin{itemize}
\item \textbf{MAP estimation}:
\iftoggle{conference}{Here}{In \emph{maximum a posteriori} (MAP) estimation, }
we wish to find the point estimate
$\xbfhat = \argmax_\xbf p(\xbf|\ybf)$,
equivalently stated as
\beq \label{eq:xmap}
    \iftoggle{conference}{\textstyle}{}
    \xbfhat = \argmin_{\xbf} \left[  f_1(\xbf)+f_2(\xbf) \right].
\eeq
\item \textbf{MMSE estimation and approximate inference}:
\iftoggle{conference}{Here}{In \emph{minimum mean-squared error} (MMSE) estimation,}
we wish to compute the posterior mean $\Exp(\xbf|\ybf)$ and maybe
also approximations of the
posterior covariance $\Cov(\xbf|\ybf)$ or
marginal posterior densities $\{p(x_n|\ybf)\}_{n=1}^N$.
\end{itemize}

For the MAP estimation problem \eqref{eq:xmap},
the separable structure of the objective function can be exploited by
one of several optimization methods, including variants of the
\emph{iterative shrinkage and thresholding algorithm} (ISTA)
\cite{ChamDLL:98}%
\iftoggle{conference}{}{
,\cite{DaubechiesDM:04,WrightNF:09,BeckTeb:09,Nesterov:07,BioDFig:07}}
and the \emph{alternating direction method of multipliers} (ADMM)
\cite{BoydPCPE:09}%
\iftoggle{conference}{.}{
,\cite{Esser:JIS:10,Chambolle:JMIV:11,He:JIS:12}.}

The MMSE and inference problems, however, are more difficult%
\iftoggle{conference}{}{\cite{pereyra2016stoch}},
even for the case of convex penalties
\cite{seeger2011fast,rangan2015admm}.
In recent years, there has been considerable interest
in approximations of \emph{loopy belief propagation}%
\iftoggle{conference}{}{
\cite{Pearl:88,YedidiaFW:03}}
for both MMSE estimation and approximate inference.  These methods
include variants of \emph{expectation propagation} (EP)
\cite{Minka:01,opper2004expectation,seeger2005expectation}
and, more recently, \emph{approximate message passing} (AMP)
\cite{DonohoMM:09,Rangan:11-ISIT}%
\iftoggle{conference}{}{
,\cite{DonohoMM:10-ITW1,DonohoMM:10-ITW2,Rangan:10-CISS}}.
For a posterior of the form \eqref{eq:pxy}, both EP and AMP
reduce the inference problem
to a sequence of problems involving only one penalty at a time.
These ``local" problems are computationally tractable under suitable penalties.
Moreover, in certain large random instances, these methods
are provably optimal \cite{BayatiM:11}%
\iftoggle{conference}{.}{,\cite{Rangan:11-ISIT,JavMon:12-arXiv}.}
Due to their generality,
these methods have been successfully applied to a wide range of problems,
e.g., \cite{fletcher2014scalable,FletcherRVB:11}%
\iftoggle{conference}{.}{%
,\cite{FletcherRVB:11,Schniter:11,SomS:12,parker2013bilinear2,Schniter:TSP:15,vila2015hyperspectral,Ziniel:TSP:15}.
}

Despite their computational simplicity, the convergence and accuracy of these
methods are not fully understood.
This work analyzes one promising EP-type method
known as \emph{expectation consistent approximate inference} (EC), originally proposed by
Opper and Winther in \cite{opper2004expectation}.
As shown in \cite{seeger2005expectation},
EC interpreted as a parallel form of the EP
method from \cite{Minka:01}, while being closely related to
the \emph{adaptive TAP} method from%
\iftoggle{conference}{}{\cite{opper2000gaussian}}
\cite{opper2001adaptive}.

As we now describe, our work contributes to the extension and understanding of Opper and Winther's EC method.
\begin{itemize}
\item \textbf{Generalization:}  We propose and analyze
a generalization of the EC
algorithm that we call \emph{Generalized EC} (GEC).  The proposed method can be applied to
arbitrary penalties $f_1(\xbf)$ and $f_2(\xbf)$,
and can also be used for both MAP or MMSE inference%
\iftoggle{conference}{.}{~by appropriate selection of estimation functions.}
Standard EC typically applies only to MMSE inference, often with one penalty
being quadratic.
Also, GEC supports a generalization of the covariance diagonalization step,
which is one of the key computational bottlenecks in standard EC
\cite{seeger2011fast}.

\item \textbf{Fixed points:}  It is well known that, when the standard
EC algorithm converges, its fixed points can be interpreted as
saddle points of an energy function~\cite{opper2004expectation,seeger2005expectation}
similar to the \emph{Bethe Free Energy} (BFE) that arises in the analysis of loopy BP
\cite{YedidiaFW:03}.  We provide a similar
energy-function interpretation of the MMSE-GEC algorithm (Theorem~\ref{thm:mmseFix}).
\iftoggle{conference}{}{
Our analysis shows that the so-called first- and second-order
terms output by MMSE-GEC can be interpreted as estimates
 of the posterior mean and variance.
Regarding the fixed points of MAP-GEC, we show that the first-order terms are critical
points of the objective function~\eqref{eq:xmap} and the second-order terms can be interpreted
as estimates of the local curvature of the objective function.
}

\item \textbf{Convergence:}  A critical concern for both EP and AMP is convergence
\cite{seeger2011fast,RanSchFle:14-ISIT}%
\iftoggle{conference}{.}{
\cite{Caltagirone:14-ISIT,Krzakala:14-ISITbethe}.
This situation is perhaps not surprising, given that they derive from loopy BP, which also
may diverge.
Most provably convergent alternate approaches are based on variants
of double-loop methods
such as \cite{opper2004expectation,rangan2015admm}.
Other modifications to improve the stability include
damping and fractional updates \cite{Seeger:08,RanSchFle:14-ISIT,Vila:ICASSP:15}
 and sequential updating \cite{manoel2014swamp}, which increase the
likelihood of convergence at the cost of convergence speed.
}
Our analysis of GEC convergence considers the first- and second-order
terms separately---a decoupling technique also used in \cite{seeger2005expectation}%
\iftoggle{conference}{}{,\cite{nickisch2009convex}}.
We show that, for strictly convex, smooth penalties, the standard updates for the first-order terms
are provably convergent.  For MAP-GEC, the second-order terms converge as well.

\item \textbf{Relation to the replica prediction of optimality:}
In \cite{tulino2013support}, Tulino et al.\ used a replica analysis from statistical physics
to predict the the MMSE error
when estimating a random vector $\xbf$ from noisy measurements of the linear transformation $\Abf\xbf$ under large, unitarily invariant, random $\Abf$.
\iftoggle{conference}{}{
This work extended the replica analyses in \cite{TanakaO:05,GuoV:05,RanganFG:12-IT}, which applied
to i.i.d.\ $\Abf$.  (See also \cite{vehkapera2014analysis}.)}
In \cite{cakmak2014samp,cakmak2015samp}, \c{C}akmak et al.\
proposed a variant of AMP (called S-AMP) using closely related methods.
In this work, we show that, when GEC is applied to linear regression, a prediction of the posterior MSE satisfies a fixed point equation that exactly matches
the replica prediction from \cite{tulino2013support}.

\iftoggle{conference}{}{
\item \textbf{Relation to ADMM:}
ADMM \cite{BoydPCPE:09} is a
popular approach to optimization problems of the form \eqref{eq:xmap} with
convex $f_i$.
Essentially, ADMM iterates individual optimizations of $f_1$ and $f_2$
together with a ``dual update'' that (asymptotically) enforces consistency
between the individual optimizations.
The dual update involves a fixed step size,
whose choice affects the convergence speed of the algorithm.
In this work, we show that GEC can be interpreted as a variant of ADMM with
two dual-updates, each with a step size that is \emph{adapted} according
to the local curvature of the corresponding penalty $f_i$.
}

\end{itemize}

\section{The Generalized EC Algorithm}

\subsection{Estimation and Diagonalization}

The proposed GEC algorithm involves two key operations:
i) \emph{estimation}, which computes an estimate of $\xbf$
using one penalty at a time; and ii) \emph{diagonalization} of
a sensitivity term.

\paragraph*{\bf Estimation}
The estimation function is constructed differently for
the MAP and MMSE cases.
In the MAP case, the estimation function is given by
\beq \label{eq:gimap}
    \iftoggle{conference}{\textstyle}{}
    \gbf_i(\rbf_i,\gammabf_i) := \argmin_{\xbf}
    \left[ f_i(\xbf)
    + \frac{1}{2}\|\xbf - \rbf_i\|^2_{\gammabf_i} \right] ,
\eeq
where $\rbf_i,\gammabf_i\in\R^N$ and $\gammabf_i > 0$ (componentwise),
and where
\[
    \iftoggle{conference}{\textstyle}{}
    \|\vbf\|^2_{\gammabf} := \sum_{n=1}^N \gamma_n|v_n|^2
\]
for any $\vbf$ and positive $\gammabf$.
\iftoggle{conference}{}{
The estimation function \eqref{eq:gimap} is a scaled version of
what is often called the \emph{proximal operator}.
}

For the MMSE problem, the estimation function is
\beq \label{eq:gimmse}
    \gbf_i(\rbf_i,\gammabf_i) := \Exp\left[ \xbf | \rbf_i,\gammabf_i \right],
\eeq
where the expectation is with respect to the conditional density
\beq \label{eq:pirgam}
    \iftoggle{conference}{\textstyle}{}
    p_i(\xbf|\rbf_i,\gammabf_i) = \frac{1}{Z_i} \exp\left[
         -f_i(\xbf)
    - \frac{1}{2}\|\xbf - \rbf_i\|^2_{\gammabf_i} \right].
\eeq

\paragraph*{\bf Diagonalization}
In its more general form, the diagonalization operator $\dbf(\Qbf)$ is an affine linear map from $\Qbf\in\R^{N\times N}$ to $\R^N$.
Several instances of diagonalization are relevant to our work.
For example, \emph{vector-valued diagonalization},
\beq \label{eq:Dvec}
    \dbf(\Qbf) := \diag(\Qbf),
\eeq
which simply returns a $N$-dimensional vector containing the diagonal elements
of $\Qbf$, and \emph{uniform diagonalization},
\beq \label{eq:Dunif}
    \dbf(\Qbf) := N^{-1} \Tr(\Qbf)\mathbf{1}_N,
\eeq
which returns a constant vector containing the \emph{average} diagonal element of $\Qbf$.
Here, $\mathbf{1}_N$ denotes the $N$-dimensional vector with all elements equal to one.

\iftoggle{conference}{}{
For the separable GLM, it will be useful to consider a block uniform
diagonalization.  In this case, we partition 
\beq \label{eq:xpart}
    \xbf = (\xbf_1;\cdots;\xbf_L), \quad \xbf_\ell \in \R^{N_\ell},
\eeq
with $\sum_{\ell} N_{\ell} = N$.  Conformal to the partition, we define the
\emph{block uniform diagonalization}
\beq \label{eq:Dblk}
    \dbf(\Qbf) := (d_1\mathbf{1}_{N_1};\cdots; d_L\mathbf{1}_{N_L}), \quad
    d_\ell = \frac{1}{N_{\ell}} \Tr(\Qbf_{\ell\ell}),
\eeq
where $\Qbf_{\ell\ell}\in\R^{N_\ell \times N_\ell}$ is the $\ell$-th diagonal block of $\Qbf$.

We note that any of these diagonalization operators
can be used with either the MAP or MMSE estimation functions.
}

\subsection{Algorithm Description}

The \emph{generalized EC} (GEC) algorithm is specified in Algorithm~\ref{algo:GEC}.
There,
$\partial \gbf_i(\rbf_i,\gammabf_i)/\partial \rbf_i$ is the $N\x N$ Jacobian matrix of $\gbf_i$ evaluated at $(\rbf_i,\gammabf_i)$,
$\Diag(\gammabf_i)$ is the diagonal matrix whose diagonal equals $\gammabf_i$,
``$\vdiv$'' is componentwise vector division, and
``$\vmult$'' is componentwise vector multiplication.
Note that it is not necessary to compute the full matrix $\Qbf_i$
in line~\ref{line:Qi}; it suffices to compute only the diagonalization $\dbf(\Qbf_i)$.

\begin{algorithm}
\caption{Generalized EC (GEC)}
\begin{algorithmic}[1]  \label{algo:GEC}
\REQUIRE{Estimation functions $\gbf_1(\cdot,\cdot)$, $\gbf_2(\cdot,\cdot)$
and diagonalization operator $\dbf(\cdot)$.}
\STATE{ Select initial $\rbf_1,\gammabf_1$ }
\REPEAT
    \FOR{ $(i,j) = (1,2)$ and $(2,1)$}
        \STATE{$\xbfhat_i \gets \gbf_i(\rbf_i,\gammabf_i)$} \label{line:xi}
        \STATE{$\Qbf_i \gets
         [\partial \gbf_i(\rbf_i,\gammabf_i)/\partial \rbf_i] \Gammabf_i^{-1},~
         \Gammabf_i = \Diag(\gammabf_i)$} \label{line:Qi}
        \STATE{$\etabf_i \gets \mathbf{1}\vdiv\dbf( \Qbf_i )$}
            \label{line:etai}
        \STATE{$\gammabf_j \gets \etabf_i - \gammabf_i$}  \label{line:gamj}
        \STATE{$\rbf_j \gets (\etabf_i \vmult \xbfhat_i - \gammabf_i \vmult \rbf_i)\vdiv\gammabf_j$}
            \label{line:rj}
    \ENDFOR
\UNTIL{Terminated}
\end{algorithmic}
\end{algorithm}

It will sometimes be useful to
rewrite Algorithm~\ref{algo:GEC} in a \emph{scaled form}.
Define $\betabf_i := \gammabf_i \vmult \rbf_i$ and $\gt_i(\betabf_i,\gammabf_i)
:= \gbf_i(\betabf_i \vdiv \gammabf_i,\gammabf_i)$.
Then GEC can be rewritten as
\begin{subequations} \label{eq:GECs}
\begin{align}
    \etabf_i &\gets \mathbf{1}\vdiv\dbf( \Qbftilde_i ), ~
    \Qbftilde_i :=
    \partial \gt_i(\betabf_i,\gammabf_i)/\partial \betabf_i \label{eq:etais} \\
    \gammabf_j &\gets \etabf_i - \gammabf_i \label{eq:nugam1} \\
    \betabf_j &\gets \etabf_i \vmult \gt_i(\betabf_i,\gammabf_i) - \betabf_i. \label{eq:betajs}
\end{align}
\end{subequations}

Note that, in line \ref{line:Qi} of Algorithm~\ref{algo:GEC}, we are required to compute the (scaled) Jacobian of the
estimation function.
For the MAP estimation function~\eqref{eq:gimap}, this quantity becomes \cite{Rangan:11-ISIT}
\beq \label{eq:gimapDeriv}
    [\partial \gbf_i(\rbf_i,\gammabf_i)/\partial \rbf_i]\Gammabf_i^{-1} =
    \left[ \Hessian{f_i}(\xbfhat_i) + \Gammabf_i \right]^{-1},
\eeq
where $\xbfhat_i$ is the minimizer in \eqref{eq:gimap}
and $\Hessian{f_i}(\xbfhat_i)$ is the Hessian of $f_i$ at that
minimizer.    For the MMSE estimation function,
this scaled Jacobian becomes the covariance matrix
\beq \label{eq:gimmseDeriv}
    [\partial \gbf_i(\rbf_i,\gammabf_i)/\partial \rbf_i]\Gammabf_i^{-1} =
    \Cov(\xbf_i|\rbf_i,\gammabf_i),
\eeq
where the covariance is taken with respect to the density
\eqref{eq:pirgam}.

\subsection{Examples} \label{sec:ex}

\paragraph*{\bf SLR with Separable Prior}
Suppose that we aim to estimate $\xbf$ given noisy linear measurements of the form
\beq \label{eq:yawgn}
    \ybf = \Abf \xbf + \wbf, \quad
    \wbf \sim {\mathcal N}(0,\gamma_w^{-1} \Ibf),
\eeq
where $\Abf\in\R^{M\times N}$ is a known matrix and
$\wbf$ is independent of $\xbf$.
Statisticians often refer to this problem as \emph{standard linear regression} (SLR).
Suppose also that $\xbf$ has independent elements with marginal densities $p(x_n)$:
\beq \label{eq:pxindep}
    \iftoggle{conference}{\textstyle}{}
    p(\xbf) = \prod_{n=1}^N p(x_n) .
\eeq
Then, the posterior $p(\xbf|\ybf)$ takes the form of \eqref{eq:pxy} when
\beq \label{eq:fawgn}
    \iftoggle{conference}{\textstyle}{}
    f_1(\xbf) = -\sum_{n=1}^N \log p(x_n), \quad
    f_2(\xbf) = \frac{\gamma_w}{2} \| \ybf - \Abf\xbf\|^2.
\eeq
The separable nature of $f_1(\xbf)$ implies that, in both the MAP or MMSE cases,
the output of the estimator $\gbf_1$
(recall \eqref{eq:gimap} and \eqref{eq:gimmse}) can be computed in
a componentwise manner, as can the diagonal terms of their Jacobians.
Likewise, the quadratic nature of $f_2(\xbf)$ implies that
the output of $\gbf_2$ can be computed by solving a linear system.

\paragraph*{\bf GLM with Separable Prior}  Now suppose that,
instead of \eqref{eq:yawgn}, we have a more general likelihood with the form
\beq \label{eq:GLM}
    \iftoggle{conference}{\textstyle}{}
    p(\ybf|\xbf) = \prod_{m=1}^M p(y_m|z_m), \quad \zbf = \Abf\xbf .
\eeq
Statisticians often refer to \eqref{eq:GLM} as the \emph{generalized
linear model} (GLM)%
\iftoggle{conference}{.}{ \cite{NelWed:72,McCulNel:89}. }
To pose the GLM in a format convenient for GEC,
we define the new vector $\ubf = (\xbf;\zbf)$.  Then, the posterior
$p(\ubf|\ybf)=p(\xbf,\zbf|\ybf)$ can be placed in the form of \eqref{eq:pxy} using the penalties\!
\begin{subequations} \label{eq:fglm}
\begin{align*}
    f_1(\ubf)
    &= f_1(\xbf,\zbf)
    \iftoggle{conference}{\textstyle}{}
    = -\sum_{n=1}^N \log p(x_n) - \sum_{m=1}^M \log p(y_m|z_m), \\
    f_2(\ubf)
    &= f_2(\xbf,\zbf) = \begin{cases}
        0       & \mbox{if } \zbf=\Abf\xbf, \\
        \infty & \mbox{if } \zbf \neq \Abf\xbf
    \iftoggle{conference}{.}{,}
        \end{cases}
\end{align*}
\end{subequations}
\iftoggle{conference}{}{where $f_2(\ubf)$ constrains $\ubf$ to the nullspace of $[\Abf~-\Ibf]$.}
Because the first penalty $f_1(\ubf)$ remains separable,
the MAP and MMSE functions can be evaluated componentwise, as in separable SLR.
For the second penalty $f_2(\ubf)$, MAP or MMSE estimation simply becomes projection onto a linear space.

\section{Fixed Points of GEC}

\subsection{Consistency}
We now characterize the fixed points of GEC
for both MAP and MMSE estimation functions.
For both scenarios, we will need the following simple consistency
result.

\begin{lemma} \label{lem:fixCon}
Consider GEC (Algorithm~\ref{algo:GEC}) with arbitrary estimation
functions $\gbf_i(\cdot,\cdot)$ and arbitrary diagonalization operator
$\dbf(\cdot)$.  For any fixed point with
$\gammabf_1+\gammabf_2 > 0$, we have
\begin{subequations}
\begin{align}
    \etabf_1 &= \etabf_2 = \etabf := \gammabf_1 + \gammabf_2
        \label{eq:etafix} \\
    \xbfhat_1 &= \xbfhat_2 = \xbfhat :=
        \left( \gammabf_1\rbf_1 + \gammabf_2\rbf_2 \right)
        \vdiv (\gammabf_1+\gammabf_2)
        \label{eq:xhatfix} .
\end{align}
\end{subequations}
\end{lemma}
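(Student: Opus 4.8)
The plan is to read the fixed-point equations straight off Algorithm~\ref{algo:GEC}, keeping in mind that a fixed point must be invariant under a \emph{full} pass of the \textbf{repeat} loop, i.e.\ under executing lines~\ref{line:xi}--\ref{line:rj} for $(i,j)=(1,2)$ and then again for $(i,j)=(2,1)$. In particular, line~\ref{line:gamj} holds for \emph{both} orderings: with $(i,j)=(1,2)$ it reads $\gammabf_2 = \etabf_1 - \gammabf_1$, and with $(i,j)=(2,1)$ it reads $\gammabf_1 = \etabf_2 - \gammabf_2$. Rearranging each gives $\etabf_1 = \gammabf_1+\gammabf_2 = \etabf_2$, which is precisely \eqref{eq:etafix}; set $\etabf := \gammabf_1+\gammabf_2$.

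Next I would clear the componentwise division in line~\ref{line:rj}. Multiplying that update componentwise by $\gammabf_j$ and substituting $\etabf_i = \etabf$ from the previous step, the two orderings give
\begin{align*}
    \gammabf_2 \vmult \rbf_2 &= \etabf \vmult \xbfhat_1 - \gammabf_1 \vmult \rbf_1, \\
    \gammabf_1 \vmult \rbf_1 &= \etabf \vmult \xbfhat_2 - \gammabf_2 \vmult \rbf_2 .
\end{align*}
Using the first identity to eliminate $\gammabf_2 \vmult \rbf_2$ from the second collapses everything to $\etabf \vmult (\xbfhat_1 - \xbfhat_2) = \zero$. This is where the hypothesis $\gammabf_1+\gammabf_2>0$ is used: $\etabf$ is then strictly positive componentwise, so it may be divided out, giving $\xbfhat_1 = \xbfhat_2 =: \xbfhat$. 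Substituting $\xbfhat$ back into the first identity yields $\gammabf_1\vmult\rbf_1 + \gammabf_2\vmult\rbf_2 = \etabf \vmult \xbfhat = (\gammabf_1+\gammabf_2)\vmult\xbfhat$, and one more componentwise division produces \eqref{eq:xhatfix}.

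There is no real obstacle here; the points that need care are (i) recognizing that both ``copies'' of lines~\ref{line:gamj} and \ref{line:rj}---with the indices $1$ and $2$ swapped---are available at a fixed point, which is what makes the argument symmetric and forces $\etabf_1=\etabf_2$; and (ii) the legality of the componentwise divisions, which is exactly why the positivity assumption is imposed. Note that the estimation functions $\gbf_i$ and the diagonalization operator $\dbf$ never enter: lines~\ref{line:xi}--\ref{line:etai} only \emph{define} $\xbfhat_i$, $\Qbf_i$, $\etabf_i$, and the conclusion follows from the bookkeeping in lines~\ref{line:gamj}--\ref{line:rj} alone---hence the lemma holds for arbitrary $\gbf_i(\cdot,\cdot)$ and $\dbf(\cdot)$.
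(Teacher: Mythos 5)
Your proof is correct and follows essentially the same route as the paper: read the fixed-point identities off lines~\ref{line:gamj} and \ref{line:rj} for both orderings $(i,j)=(1,2)$ and $(2,1)$, obtain $\etabf_1=\etabf_2=\gammabf_1+\gammabf_2$, and then clear the componentwise division in line~\ref{line:rj} to get $\etabf\vmult\xbfhat_i=\gammabf_1\vmult\rbf_1+\gammabf_2\vmult\rbf_2$ for each $i$, with positivity of $\etabf$ justifying the final division. The only cosmetic difference is that the paper writes the two resulting expressions for $\xbfhat_1$ and $\xbfhat_2$ directly and observes they coincide, whereas you eliminate one from the other first; the content is identical.
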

\begin{proof}  From line~\ref{line:gamj} of Algorithm~\ref{algo:GEC},
$\etabf_i = \gammabf_1 + \gammabf_2$
for $i=1,2$, which proves \eqref{eq:etafix}.
Also, since $\gammabf_1+\gammabf_2>0$, the elements of $\etabf$
are invertible.
In addition, from line~\ref{line:rj},
\[
    \xbfhat_i = \left( \gammabf_1\vmult\rbf_1 + \gammabf_2\vmult\rbf_2 \right)
    \vdiv \etabf_i \text{~~for~~} i=1,2,
\]
which proves \eqref{eq:xhatfix}.
\end{proof}

\subsection{MAP Estimation} \label{sec:mapFix}

We first examine GEC's fixed points for MAP estimation.

\begin{theorem} \label{thm:mapFix}
Consider GEC (Algorithm~\ref{algo:GEC})
with the MAP estimation functions from~\eqref{eq:gimap}
and an arbitrary diagonalization $\dbf(\cdot)$.
For any fixed point with $\gammabf_i > 0$,
let $\xbfhat=\xbfhat_i$ be the common value
of the two estimates as defined in Lemma~\ref{lem:fixCon}.
Then $\xbfhat$ is a stationary point of the minimization~\eqref{eq:xmap}.
\end{theorem}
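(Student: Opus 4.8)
The plan is to combine the first-order optimality conditions of the two proximal minimizations defining $\gbf_1$ and $\gbf_2$ with the consistency relations established in Lemma~\ref{lem:fixCon}. Concretely, since $\xbfhat_i = \gbf_i(\rbf_i,\gammabf_i)$ minimizes $f_i(\xbf) + \tfrac12\|\xbf-\rbf_i\|^2_{\gammabf_i}$ over $\xbf$, and the quadratic term has gradient $\Gammabf_i(\xbf-\rbf_i) = \gammabf_i\vmult(\xbf-\rbf_i)$, the stationarity condition for the inner problem in \eqref{eq:gimap} reads
\[
    \nabla f_i(\xbfhat_i) + \gammabf_i\vmult(\xbfhat_i - \rbf_i) = \zero,
    \qquad i = 1,2,
\]
or equivalently $\nabla f_i(\xbfhat_i) = \gammabf_i\vmult(\rbf_i - \xbfhat_i)$.

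Next I would invoke Lemma~\ref{lem:fixCon}: at any fixed point with $\gammabf_i>0$ (so in particular $\gammabf_1+\gammabf_2>0$), we have $\xbfhat_1 = \xbfhat_2 = \xbfhat$ and $(\gammabf_1+\gammabf_2)\vmult\xbfhat = \gammabf_1\vmult\rbf_1 + \gammabf_2\vmult\rbf_2$. Substituting $\xbfhat_i = \xbfhat$ into the two stationarity conditions and summing over $i=1,2$ gives
\[
    \nabla f_1(\xbfhat) + \nabla f_2(\xbfhat)
    = \gammabf_1\vmult\rbf_1 + \gammabf_2\vmult\rbf_2 - (\gammabf_1+\gammabf_2)\vmult\xbfhat
    = \zero,
\]
which is exactly the stationarity condition for the minimization \eqref{eq:xmap}. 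That completes the argument.

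There is no serious obstacle here; the argument is essentially an algebraic cancellation once Lemma~\ref{lem:fixCon} is in hand. The only point needing a little care is the passage from the per-penalty first-order conditions to a stationarity statement for $f_1+f_2$ when the $f_i$ are not differentiable: in that case one reads the argmin condition as the inclusion $\gammabf_i\vmult(\rbf_i-\xbfhat)\in\partial f_i(\xbfhat)$, and the same cancellation yields $\zero\in\partial f_1(\xbfhat)+\partial f_2(\xbfhat)\subseteq\partial(f_1+f_2)(\xbfhat)$. I would state the theorem under a smoothness assumption on the $f_i$ (so that ``stationary point'' means $\nabla f_1(\xbfhat)+\nabla f_2(\xbfhat)=\zero$) and add a remark covering the subdifferential version.
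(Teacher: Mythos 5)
Your proof is correct and follows essentially the same route as the paper's: write the first-order optimality condition $\nabla f_i(\xbfhat)+\gammabf_i\vmult(\xbfhat-\rbf_i)=\zero$ for each proximal subproblem, invoke the consistency relations of Lemma~\ref{lem:fixCon}, and sum over $i$ so that the $\gammabf_i\vmult\rbf_i$ terms cancel against $(\gammabf_1+\gammabf_2)\vmult\xbfhat$. Your added remark on the subdifferential version for nonsmooth $f_i$ is a reasonable refinement, but the core argument is identical to the one in the paper.
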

\begin{proof}
\iftoggle{conference}{
See the full paper \cite{Fletcher:arxiv:16}.
}{
See Appendix~\ref{sec:mapFixPf}.
}
\end{proof}

\subsection{MAP Estimation and Curvature} \label{sec:curvature}

Note that Theorem~\ref{thm:mapFix} applies
to an arbitrary diagonalization operator $\dbf(\cdot)$.
This raises two questions:
i) what is the role of the diagonalization operator $\dbf(\cdot)$, and
ii) how can the fixed point $\etabf$ be interpreted as a result of that diagonalization?
\iftoggle{conference}{}{
We now show that, under certain additional conditions and certain choices
of $\dbf$, $\etabf$ can be related to the \emph{curvature}
of the optimization objective in \eqref{eq:xmap}. }

Let $\xbfhat$
be a stationary point of \eqref{eq:xmap} and let $\Pbf_i = \Hessian{f_i}(\xbfhat)$ be the Hessian
of $f_i$ at $\xbfhat$.  Then, the Hessian of the objective function in
\eqref{eq:xmap} is $\Pbf_1+\Pbf_2$.  Furthermore, let
\beq \label{eq:etahat}
    \etabfhat := \mathbf{1}\vdiv\dbf\left( (\Pbf_1+\Pbf_2)^{-1} \right),
\eeq
so that $\mathbf{1}\vdiv\etabfhat$ is the diagonal of the inverse Hessian.  Geometrically speaking,
this inverse Hessian measures the curvature of the objective function at the critical point $\xbfhat$.

We now identify two cases where $\etabf = \etabfhat$:
i) when $\Pbf_i$ are diagonal, and ii) when $\Pbf_i$ are \emph{free}.
To define ``free,'' consider the Stieltjes transform $S_{\Pbf}(\omega)$
of any real symmetric matrix $\Pbf$:
\beq \label{eq:stieltjes}
    \iftoggle{conference}{\textstyle}{}
    S_{\Pbf}(\omega) = \frac{1}{N} \Tr\left[ (\Pbf - \omega \Ibf)^{-1} \right]
     = \frac{1}{N} \sum_{n=1}^N \frac{1}{\lambda_n - \omega},
\eeq
where $\lambda_n$ are the eigenvalues of $\Pbf$. Also, let $R_{\Pbf}(\omega)$
denote the so-called $R$-transform of $\Pbf$, given by
\beq \label{eq:rtrans}
    \iftoggle{conference}{\textstyle}{}
    R_{\Pbf}(\omega) =  S_{\Pbf}^{-1}(-\omega) - \frac{1}{\omega},
\eeq
where the inverse $S_{\Pbf}^{-1}(\cdot)$ is in terms of composition of functions.
The Stieltjes and $R$-transforms are discussed in detail in \cite{TulinoV:04}.
We will say that $\Pbf_1$ and $\Pbf_2$ are ``free'' if
\beq \label{eq:free}
    R_{\Pbf_1 + \Pbf_2}(\omega) = R_{\Pbf_1}(\omega) + R_{\Pbf_2}(\omega).
\eeq

An important example of freeness is the following.
Suppose that the penalty functions are given by
$f_i(\xbf) = h_i(\Abf_i\xbf)$ for some matrices $\Abf_i$ and functions $h_i(\cdot)$.
Then
\[
    \Pbf_i = \Hessian{f_i}(\xbfhat) =  \Abf_i\tran \Hessian{h_i}(\zbfhat_i) \Abf_i, \quad \zbfhat_i=\Abf_i\xbfhat.
\]
It is shown in \cite{TulinoV:04} that, if $\zbfhat_i$ are fixed and $\Abf_i$
are unitarily invariant random matrices, then $\Pbf_i$ are asymptotically
free in certain limits as $N \arr \infty$.  Freeness will thus occur in the limits
of large problem with unitarily invariant random matrices.

\begin{theorem} \label{thm:curve}
Consider GEC (Algorithm~\ref{algo:GEC}) with the MAP estimation functions~\eqref{eq:gimap}.
Consider any fixed point with $\gammabf_i>0$, and let $\xbfhat$ and $\etabf$ be the
common values of $\xbfhat_i$ and $\etabf_i$ from Lemma~\ref{lem:fixCon}. Recall
that $\xbfhat$ is a stationary point of the minimization \eqref{eq:xmap}
via Theorem~\ref{thm:mapFix}.
Then $\etabf = \etabfhat$ from \eqref{eq:etahat} under either
\begin{enumerate}[(a)]
\item vector-valued $\dbf(\cdot)$ from \eqref{eq:Dvec} and diagonal $\Pbf_i$; or
\item uniform $\dbf(\cdot)$ from \eqref{eq:Dunif} and free $\Pbf_i$.
\end{enumerate}
\end{theorem}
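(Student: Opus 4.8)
The plan is to convert the fixed-point equations of Algorithm~\ref{algo:GEC} into a direct relation among $\etabf$, $\gammabf_1$, $\gammabf_2$ and the Hessians $\Pbf_i=\Hessian{f_i}(\xbfhat)$, and then to evaluate the two diagonalizations one at a time. First I would specialize the MAP Jacobian identity~\eqref{eq:gimapDeriv}: at a fixed point $\xbfhat_i=\xbfhat$, so line~\ref{line:Qi} gives $\Qbf_i=(\Pbf_i+\Gammabf_i)^{-1}$ with $\Gammabf_i=\Diag(\gammabf_i)$, and line~\ref{line:etai} then reads $\etabf_i=\mathbf{1}\vdiv\dbf\big((\Pbf_i+\Gammabf_i)^{-1}\big)$. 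Feeding in Lemma~\ref{lem:fixCon} (which gives $\etabf_1=\etabf_2=\etabf=\gammabf_1+\gammabf_2$) produces the two self-consistency identities
\[
    \mathbf{1}\vdiv\dbf\big((\Pbf_i+\Gammabf_i)^{-1}\big)=\gammabf_1+\gammabf_2,\qquad i=1,2,
\]
and these, together with $\etabf=\gammabf_1+\gammabf_2$, are the only facts about the algorithm I will use; the rest is linear algebra in case~(a) and free probability in case~(b).

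For part~(a), with $\dbf(\cdot)=\diag(\cdot)$ and $\Pbf_i=\Diag(\pbf_i)$ diagonal, the inverse $(\Pbf_i+\Gammabf_i)^{-1}$ is the diagonal matrix with diagonal $\mathbf{1}\vdiv(\pbf_i+\gammabf_i)$, so the self-consistency identity collapses componentwise to $\pbf_i+\gammabf_i=\etabf$, i.e.\ $\gammabf_i=\etabf-\pbf_i$. Summing over $i$ and using $\gammabf_1+\gammabf_2=\etabf$ gives $\etabf=\pbf_1+\pbf_2$. Since $\Pbf_1+\Pbf_2=\Diag(\pbf_1+\pbf_2)$ makes $\etabfhat=\mathbf{1}\vdiv\diag\big((\Pbf_1+\Pbf_2)^{-1}\big)=\pbf_1+\pbf_2$ as well, we conclude $\etabf=\etabfhat$.

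For part~(b), uniform diagonalization makes every output of line~\ref{line:etai} a constant vector, and in the scalar regime this induces I would take $\etabf=\eta\mathbf{1}_N$ and $\gammabf_i=\gamma_i\mathbf{1}_N$ with $\gamma_i>0$ and $\gamma_1+\gamma_2=\eta$. The self-consistency identity then becomes the scalar equation $\frac{1}{N}\Tr\big[(\Pbf_i+\gamma_i\Ibf)^{-1}\big]=1/\eta$, i.e.\ $S_{\Pbf_i}(-\gamma_i)=1/\eta$ by~\eqref{eq:stieltjes}; inverting, $S_{\Pbf_i}^{-1}(1/\eta)=-\gamma_i$. Plugging this into the definition~\eqref{eq:rtrans} at the argument $\omega=-1/\eta$ gives $R_{\Pbf_i}(-1/\eta)=S_{\Pbf_i}^{-1}(1/\eta)+\eta=\eta-\gamma_i$. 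Now I would invoke freeness~\eqref{eq:free}:
\[
    R_{\Pbf_1+\Pbf_2}(-1/\eta)=R_{\Pbf_1}(-1/\eta)+R_{\Pbf_2}(-1/\eta)=2\eta-(\gamma_1+\gamma_2)=\eta .
\]
Reading~\eqref{eq:rtrans} the other way, $R_{\Pbf_1+\Pbf_2}(-1/\eta)=S_{\Pbf_1+\Pbf_2}^{-1}(1/\eta)+\eta$, so $S_{\Pbf_1+\Pbf_2}^{-1}(1/\eta)=0$, i.e.\ $S_{\Pbf_1+\Pbf_2}(0)=\frac{1}{N}\Tr\big[(\Pbf_1+\Pbf_2)^{-1}\big]=1/\eta$. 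Hence $\etabfhat=\mathbf{1}\vdiv\dbf\big((\Pbf_1+\Pbf_2)^{-1}\big)=\eta\mathbf{1}_N=\etabf$, as claimed.

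The main obstacle is entirely in part~(b); part~(a) is pure bookkeeping. Three things need care. First, the reduction to \emph{scalar} $\Gammabf_i=\gamma_i\Ibf$: this is exactly what lets $\frac{1}{N}\Tr[(\Pbf_i+\Gammabf_i)^{-1}]$ be read as the value $S_{\Pbf_i}(-\gamma_i)$ of the scalar Stieltjes transform of $\Pbf_i$, since for a non-scalar $\Gammabf_i$ the trace is not a Stieltjes-transform value of $\Pbf_i$ and the additivity argument breaks; so one works in the scalar regime that is natural to uniform diagonalization. Second, one must check that the Stieltjes transforms are evaluated (and inverted) outside the relevant spectra — guaranteed because $\gamma_i>0$ puts $-\gamma_i$ below the spectrum of $\Pbf_i$ (nonnegative in the convex, smooth case) and because $\etabfhat$ being well defined forces $0$ outside the spectrum of $\Pbf_1+\Pbf_2$, so all the transforms, their compositional inverses, and \eqref{eq:free} itself are legitimate. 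Third, one must pick the argument $\omega=-1/\eta$ precisely so that $R$-transform additivity lands on $S_{\Pbf_1+\Pbf_2}(0)$, which is $\frac{1}{N}\Tr[(\Pbf_1+\Pbf_2)^{-1}]$; everything else is routine substitution.
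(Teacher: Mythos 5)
Your proposal is correct and follows essentially the same route as the paper's proof: part (a) reduces to the componentwise identity $\pbf_i+\gammabf_i=\etabf$ (equivalently $\gammabf_j=\pbf_i$), and part (b) reads the fixed point as $S_{\Pbf_i}(-\gamma_i)=\eta^{-1}$, identifies $\gamma_j=R_{\Pbf_i}(-\eta^{-1})$, and uses $R$-transform additivity to land on $S_{\Pbf_1+\Pbf_2}(0)=\eta^{-1}$. Your added remarks on the domains where the Stieltjes transforms and their compositional inverses are legitimate are a sensible refinement that the paper omits, but they do not change the argument.
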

\begin{proof}
\iftoggle{conference}{
See the full paper \cite{Fletcher:arxiv:16}.
}{See Appendix~\ref{sec:curvePf}.
}
\end{proof}

\subsection{MMSE Estimation} \label{sec:mmseFix}

We now consider the fixed points of GEC under MMSE estimation functions.
It is well-known that
the fixed points of the standard EC algorithm are critical points
of a certain free-energy optimization for approximate inference
\cite{opper2004expectation,seeger2005expectation}.
We derive a similar characterization of the fixed points of GEC.

Let $p(\xbf|\ybf)$ be the density \eqref{eq:pxy} for some fixed $\ybf$.
Then, given any density $b(\xbf)$, it is straightforward to show that
the KL divergence between $b(\xbf)$ and $p(\xbf|\ybf)$ can be expressed as
\beq \label{eq:FE}
    D(b\|p) = D(b\|e^{-f_1}) + D(b\|e^{-f_2}) + H(b) + \mbox{const},
\eeq
where $H(b)$ is the differential entropy of $b$ and
the constant term does not depend on $b$.  Thus, in principle,
we could compute $p$ by minimizing \eqref{eq:FE} over all densities $b$.
Of course, this minimization is generally intractable since it involves
a search over an $N$-dimensional density.

To approximate the minimization, define
\beq \label{eq:JBFE}
    J(b_1,b_2,q) := D(b_1\|e^{-f_1}) + D(b_2\|e^{-f_2}) + H(q),
\eeq
where $b_1$, $b_2$ and $q$ are densities on the variable $\xbf$.
Note that minimization of \eqref{eq:FE}
over $b$ is equivalent to the optimization
\iftoggle{conference}{
\begin{align}
    & \textstyle \min_{b_1,b_2} \max_q J(b_1,b_2,q) \label{eq:JBFEOpt} \\
    & \text{\,such that~~} b_1 = b_2 = q.  \label{eq:BFEeqcon}
\end{align}
}{
\beq \label{eq:JBFEOpt}
    \min_{b_1,b_2} \max_q J(b_1,b_2,q)
\eeq
under the constraint
\beq \label{eq:BFEeqcon}
    b_1 = b_2 = q.
\eeq
}
The energy function \eqref{eq:JBFE} is known as the \emph{Bethe Free Energy}
(BFE).  Under the constraint \eqref{eq:BFEeqcon},
the BFE matches the original energy function~\eqref{eq:FE}.
However, BFE minimization under the constraint~\eqref{eq:BFEeqcon} is
equally intractable.

As with EC, the GEC
algorithm can be derived as a relaxation of the above BFE optimization,
wherein \eqref{eq:BFEeqcon} is replaced by the so-called
\emph{moment matching} constraints:
\begin{subequations} \label{eq:MMcon}
\begin{align}
    \Exp(\xbf|b_1) &= \Exp(\xbf|b_2) = \Exp(\xbf|q)
        \label{eq:MMconmean} \\
    \dbf(\Exp(\xbf\xbf\tran|b_1)) &= \dbf(\Exp(\xbf\xbf\tran|b_2))
    = \dbf(\Exp(\xbf\xbf\tran|q)). \label{eq:MMconvar}
\end{align}
\end{subequations}
Thus, instead of requiring a perfect match in the densities $b_1,b_2,q$ as in \eqref{eq:BFEeqcon},
GEC requires only a match in
their first moments and certain diagonal components of their second moments.
Note that, for the vector-valued diagonalization
\eqref{eq:Dvec}, \eqref{eq:MMconvar} is equivalent to
\iftoggle{conference}{%
    $\Exp\left[ x_n^2 \,|\, b_i \right] = \Exp\left[ x_n^2 \,|\, q \right]
    ~\forall n, i$,
}{%
\[
    \Exp\left[ x_n^2 \,|\, b_i \right] = \Exp\left[ x_n^2 \,|\, q \right]
    ~~\forall n, i ,
\]}%
which requires only that the marginal 2nd moments match.
Under the uniform diagonalization \eqref{eq:Dunif},
\eqref{eq:MMconvar} is equivalent to
\[
    \iftoggle{conference}{\textstyle}{}
    \frac{1}{N} \sum_{n=1}^N
    \Exp\left[ x_n^2 \,|\, b_i \right] =
    \frac{1}{N} \sum_{n=1}^N \Exp\left[ x_n^2 \,|\, q \right],
    ~~i=1,2 ,
\]
requiring only that the \emph{average} 2nd marginal moments match.

\begin{theorem} \label{thm:mmseFix}
Consider GEC (Algorithm~\ref{algo:GEC}) with the
MMSE estimation functions \eqref{eq:gimmse} and either vector-valued
\eqref{eq:Dvec} or uniform \eqref{eq:Dunif} diagonalization.
For any fixed point with $\gammabf_i>0$,
let $\xbfhat$ and $\etabf$ be the common values of $\xbfhat_i$ and $\etabf_i$
from Lemma~\ref{lem:fixCon}. Also let
\beq \label{eq:bifix}
    b_i(\xbf) = p_i(\xbf|\rbf_i,\gammabf_i)
\eeq
for $p_i(\xbf|\rbf_i,\gammabf_i)$ from \eqref{eq:pirgam} and
let $q(\xbf)$ be the Gaussian density
\beq \label{eq:qfix}
    \iftoggle{conference}{\textstyle}{}
    q(\xbf) \propto \exp\left[ -\frac{1}{2} \|\xbf-\xbfhat\|^2_{\etabf} \right].
\eeq
Then, $b_1,b_2,q$ are stationary points of the optimization
\eqref{eq:JBFEOpt} subject to the moment matching constraints \eqref{eq:MMcon}.
In addition, $\xbfhat$ is the mean, and $\etabf$
the marginal precision, of these densities:
\begin{align}
    \xbfhat      &= \Exp(\xbf|q) = \Exp(\xbf|b_i), ~i=1,2 \\
    \mathbf{1}\vdiv\etabf  &= \dbf(\Cov(\xbf\xbf\tran|q)) = \dbf(\Cov(\xbf\xbf\tran|b_i)), ~i=1,2.
\end{align}
\end{theorem}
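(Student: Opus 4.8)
The plan is to verify the stationarity conditions for the constrained BFE optimization \eqref{eq:JBFEOpt}–\eqref{eq:MMcon} directly, using the Lagrangian, and to check that the candidate densities $b_1,b_2,q$ in \eqref{eq:bifix}–\eqref{eq:qfix} satisfy those conditions at a GEC fixed point. Because the moment matching constraints \eqref{eq:MMcon} only constrain the mean and the diagonal part $\dbf(\Exp(\xbf\xbf\tran|\cdot))$ of each density, the Lagrange multipliers naturally pair off into a vector multiplier (call it $\rbf$-like) for the first-moment constraint and a vector multiplier (an $\etabf$- or $\gammabf$-like quantity, since $\dbf$ is affine) for the diagonal second-moment constraint. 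First I would form the Lagrangian $J(b_1,b_2,q) + \text{(linear terms in the constraints)}$ and take variations with respect to each of $b_1$, $b_2$, and $q$ separately, remembering the normalization constraints $\int b_i = \int q = 1$.

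Varying with respect to $b_i$ in $D(b_i\|e^{-f_i})$ plus the multiplier terms gives a stationarity equation whose solution is exactly an exponential-family density of the form $b_i(\xbf)\propto \exp[-f_i(\xbf) - \tfrac12\|\xbf\|^2_{\gammabf_i} + \langle \text{linear}, \xbf\rangle]$, which after completing the square is precisely $p_i(\xbf|\rbf_i,\gammabf_i)$ from \eqref{eq:pirgam}; the multipliers on the first and (diagonal) second moment constraints get identified with $\gammabf_i\vmult\rbf_i$ and $\gammabf_i$ respectively, the identification being consistent because $\dbf$ is affine and the diagonal of the quadratic form is what enters. Varying with respect to $q$ in $H(q)$ plus the multiplier terms gives $q(\xbf)\propto\exp[-\tfrac12\|\xbf-\xbfhat\|^2_{\etabf}]$ up to the sign/combination of multipliers, which is \eqref{eq:qfix}. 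The remaining content is bookkeeping: the multiplier attached to $q$'s constraints must equal the sum of those attached to $b_1$ and $b_2$ (this is the "flow conservation" at the $q$ node), and this is exactly what Lemma~\ref{lem:fixCon} delivers, namely $\etabf = \gammabf_1+\gammabf_2$ and $\xbfhat = (\gammabf_1\vmult\rbf_1+\gammabf_2\vmult\rbf_2)\vdiv\etabf$. I would then invoke Lemma~\ref{lem:fixCon} to close the loop: at a GEC fixed point the algorithm's own update equations force precisely the multiplier relations needed for stationarity, and conversely the forms \eqref{eq:bifix}–\eqref{eq:qfix} make the stationarity equations hold.

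Finally I would establish the moment identities. By \eqref{eq:gimmseDeriv}, the scaled Jacobian $\Qbf_i$ in line~\ref{line:Qi} is exactly $\Cov(\xbf|\rbf_i,\gammabf_i)=\Cov(\xbf|b_i)$, so line~\ref{line:etai} reads $\etabf_i = \mathbf{1}\vdiv\dbf(\Cov(\xbf|b_i))$; at a fixed point $\etabf_i=\etabf$, giving $\mathbf{1}\vdiv\etabf = \dbf(\Cov(\xbf|b_i))$ for $i=1,2$. For $q$, the Gaussian \eqref{eq:qfix} has covariance $\Diag(\etabf)^{-1}$, whose $\dbf$-image (for vector-valued or uniform $\dbf$) is $\mathbf{1}\vdiv\etabf$, so the three agree; likewise $\Exp(\xbf|q)=\xbfhat$ and $\Exp(\xbf|b_i)=\gbf_i(\rbf_i,\gammabf_i)=\xbfhat_i=\xbfhat$ by \eqref{eq:gimmse} and Lemma~\ref{lem:fixCon}. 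The main obstacle I anticipate is the careful matching of Lagrange multipliers to the GEC variables $(\rbf_i,\gammabf_i)$ through the affine map $\dbf$: one has to be attentive that only the $\dbf$-projection of the second moment is constrained, so the multiplier lives in $\R^N$ rather than in symmetric-matrix space, and that the quadratic penalty $\tfrac12\|\xbf\|^2_{\gammabf_i}$ produced by the variation has exactly the diagonal structure that a $\dbf$-constraint multiplier can produce — here the restriction to vector-valued \eqref{eq:Dvec} or uniform \eqref{eq:Dunif} diagonalization in the hypothesis is what makes this work cleanly, since for those choices $\dbf$ returns (a rescaling of) the literal diagonal. Everything else is differentiation of KL divergence and entropy functionals and completing the square, which is routine.
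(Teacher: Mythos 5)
Your proposal is correct and follows essentially the same route as the paper's proof: a Lagrangian with vector multipliers for the first-moment and diagonalized second-moment constraints, identification of those multipliers with $\gammabf_i\vmult\rbf_i$ and $\gammabf_i$, completion of the square to recognize $b_i=p_i(\cdot|\rbf_i,\gammabf_i)$ and the Gaussian $q$, the ``flow conservation'' $\etabf=\gammabf_1+\gammabf_2$, $\etabf\vmult\xbfhat=\betabf_1+\betabf_2$ from Lemma~\ref{lem:fixCon}, and the moment identities via \eqref{eq:gimmseDeriv}. The paper phrases the $b_i$ and $q$ steps as rewriting the Lagrangian as a KL divergence (so the candidates are exact minimizers/maximizers rather than mere stationary points), but the content, including your observation that the vector-valued or uniform choice of $\dbf$ is what lets $\gammabf_i\tran\dbf(\Exp(\xbf\xbf\tran))$ reduce to $\Exp[\|\xbf\|^2_{\gammabf_i}]$, matches the paper's argument.
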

\begin{proof}
\iftoggle{conference}{
See the full paper \cite{Fletcher:arxiv:16}.
}{See Appendix~\ref{sec:mmseFixPf}.
}
\end{proof}

\subsection{An Unbiased Estimate of $\xbf$} \label{sec:suffstat}

As described in Section~\ref{sec:ex}, a popular application of GEC is
to approximate the marginals of the posterior density \eqref{eq:pxy}
in the case that the first penalty $f_1(\xbf)$ describes
the prior and the second penalty
$f_2(\xbf;\ybf)$ describes the likelihood.
That is,
\[
    p(\xbf) \propto e^{-f_1(\xbf)} \text{~~and~~}
    p(\ybf|\xbf) \propto e^{-f_2(\xbf; \ybf)}.
\]
\iftoggle{conference}{}{Here, we have made the dependence of $f_2(\xbf;\ybf)$ on $\ybf$ explicit. }
The GEC algorithm produces three estimates for the posterior density:
$b_1$, $b_2$, and $q$.  Consider the first of these estimates, $b_1$.
From~\eqref{eq:bifix} and \eqref{eq:pirgam}, this belief estimate is given by
\beq \label{eq:b1xr}
    \iftoggle{conference}{\textstyle}{}
    b_1(\xbf;\rbf_1,\gammabf_1)
    = Z(\rbf_1)^{-1} p(\xbf)
        \exp\left[-\frac{1}{2}\|\xbf-\rbf_1\|^2_{\gammabf_1}\right],
\eeq
where $Z(\rbf_1)$ is a normalization constant.

\iftoggle{conference}{
In the full paper \cite{Fletcher:arxiv:16}, we argue that $\rbf_1$ can be
interpreted as a random variable that, under a natural choice of prior, yields
}{
If we model $\rbf_1$ as a random vector, then \eqref{eq:b1xr} implies that
\[
    p(\xbf|\rbf_1) = b_1(\xbf;\rbf_1,\gammabf_1) ,
\]
From Bayes rule, we know $p(\xbf|\rbf_1) = p(\xbf)p(\rbf_1|\xbf)/p(\rbf_1)$,
which together with \eqref{eq:b1xr} implies
\[
    p(\rbf_1|\xbf)
    = \frac{p(\rbf_1)}{Z(\rbf_1)}
        \exp\left[-\frac{1}{2}\|\rbf_1-\xbf\|^2_{\gammabf_1}\right] .
\]
For $p(\rbf_1)$ to be an admissible prior density on $\rbf_1$, it must satisfy $p(\rbf_1)\geq 0$, $\int p(\rbf_1) \dif\rbf_1=1$, and $\int p(\rbf_1|\xbf)\dif\rbf_1=1~\forall \xbf$.
It is straightforward to show that one admissible choice is
\[
    p(\rbf_1) = cZ(\rbf_1), \quad c^2 = (2\pi)^{-N}
    \prod_{n=1}^N \gamma_{1n}.
\]
Under this choice, we get}
\beq \label{eq:prx}
    p(\rbf_1|\xbf)
    ={\mathcal N}(\xbf,\Gammabf_1^{-1}) ,
\eeq%
in which case $\rbf_1$ can be interpreted as an unbiased estimate of $\xbf$ with $\Gammabf_1^{-1}$-covariance Gaussian estimation error.

The situation above is reminiscent of AMP algorithms \cite{DonohoMM:09,Rangan:11-ISIT}.
Specifically, their state evolution analyses \cite{BayatiM:11}
show that, under large i.i.d.\ $\Abf$, they
recursively produce a sequence of vectors $\{\rbf^k\}_{k\geq 0}$
that can be modeled as realizations of the true vector $\xbf$ plus
zero-mean white Gaussian noise.
\iftoggle{conference}{}{
They then compute a sequence of estimates of
$\xbf$ by ``denoising" each $\rbf^k$.}

\section{Convergence of the First-Order Terms for Strictly Convex Penalties}

We first analyze the convergence of GEC with fixed
``second-order terms'' $\etabf_i$ and $\gammabf_i$.
To this end, fix $\gammabf_i>0$ at arbitrary values and assume that $\etabf_i$ are fixed
points of \eqref{eq:nugam1}.
Then Lemma~\ref{lem:fixCon} implies that $\etabf_1=\etabf_2=\etabf:=\gammabf_1+\gammabf_2$.
With $\etabf_i$ and $\gammabf_i$ fixed, the (scaled) GEC algorithm
\eqref{eq:GECs} updates only $\betabf_i=\gammabf_i\vmult\xbf_i$.
In particular, \eqref{eq:betajs} implies that this update is
\beq \label{eq:betaFO}
    \betabf_j \gets (\Gammabf_1+\Gammabf_2)\gt_i(\betabf_i,\gammabf_i) - \betabf_i,
    ~(i,j)\in\{(1,2),(2,1)\} .
\eeq
\iftoggle{conference}{
We analyze the recursion \eqref{eq:betaFO} under the following assumption
-- the full paper \cite{Fletcher:arxiv:16} provides a slightly more general
condition.

\begin{assumption} \label{as:firstOrder}
Suppose that $f_i(\xbf)$ is strictly convex and smooth in that
its Hessian satisfies
\beq \label{eq:fihessbnd}
    c_{i1}\Ibf \leq \Hessian{f_i}(\xbf) \leq c_{i2}\Ibf,
\eeq
for constants $c_{i1}, c_{i1} > 0$ and all $\xbf$.
\end{assumption}
}{
We analyze the recursion \eqref{eq:betaFO} under the following assumption

\begin{assumption} \label{as:firstOrder}
For $i=1,2$, fix $\gammabf_i > 0$ and suppose that
$\gt_i(\betabf_i,\gammabf_i)$ is differentiable in $\betabf_i$.
Also define
\beq \label{eq:Qderas}
    \Qbftilde_i(\betabf_i) :=
    \partial \gt_i(\betabf_i,\gammabf_i)/\partial \betabf_i ,
\eeq
and assume that $\Qbftilde_i(\betabf_i)$ is symmetric and that
there exists constants $c_{i1}, c_{i2} > 0$ such that, for all $\betabf_i$,
\beq \label{eq:Qderivbnd}
     c_{i1}\Ibf + \Gammabf_i \leq \Qbftilde_i(\betabf_i)^{-1} \leq  c_{i2}\Ibf + \Gammabf_i.
\eeq
\end{assumption}

This assumption is valid under strictly convex penalties:

\begin{lemma} \label{lem:convFn}  Suppose that $f_i(\xbf)$ is strictly convex in that
its Hessian satisfies
\beq \label{eq:fihessbnd}
    c_{i1}\Ibf \leq \Hessian{f_i}(\xbf) \leq c_{i2}\Ibf,
\eeq
for constants $c_{i1}, c_{i1} > 0$ and all $\xbf$.  Then, the MAP and MMSE estimation
functions \eqref{eq:gimap} and \eqref{eq:gimmse} satisfy Assumption~\ref{as:firstOrder}.
\end{lemma}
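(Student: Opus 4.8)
The plan is to reduce both parts of Assumption~\ref{as:firstOrder} to facts already recorded in the paper, handling the MAP and MMSE estimators in parallel. First observe that, since $\gt_i(\betabf_i,\gammabf_i)=\gbf_i(\Gammabf_i^{-1}\betabf_i,\gammabf_i)$, the chain rule gives $\Qbftilde_i(\betabf_i)=[\partial\gbf_i(\rbf_i,\gammabf_i)/\partial\rbf_i]\Gammabf_i^{-1}$ evaluated at $\rbf_i=\Gammabf_i^{-1}\betabf_i$; this is exactly the matrix $\Qbf_i$ of line~\ref{line:Qi}, whose closed form is \eqref{eq:gimapDeriv} in the MAP case and \eqref{eq:gimmseDeriv} in the MMSE case. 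So it suffices to verify that $\Qbf_i$ exists, is symmetric, and satisfies $c_{i1}\Ibf+\Gammabf_i\preceq\Qbf_i^{-1}\preceq c_{i2}\Ibf+\Gammabf_i$ with the very same constants as in \eqref{eq:fihessbnd}.

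\emph{MAP.} Under \eqref{eq:fihessbnd} the objective in \eqref{eq:gimap} is $C^2$ and strongly convex, so it has a unique minimizer $\xbfhat_i$ solving $\nabla f_i(\xbfhat_i)+\Gammabf_i(\xbfhat_i-\rbf_i)=\zero$; the Jacobian of this equation in $\xbfhat_i$ is $\Hessian{f_i}(\xbfhat_i)+\Gammabf_i\succ0$, so the implicit function theorem makes $\gbf_i(\cdot,\gammabf_i)$ continuously differentiable and, upon differentiating, yields \eqref{eq:gimapDeriv}, i.e.\ $\Qbf_i=[\Hessian{f_i}(\xbfhat_i)+\Gammabf_i]^{-1}$. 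This matrix is symmetric, and $\Qbf_i^{-1}=\Hessian{f_i}(\xbfhat_i)+\Gammabf_i$ is sandwiched by $c_{i1}\Ibf+\Gammabf_i$ and $c_{i2}\Ibf+\Gammabf_i$ directly from \eqref{eq:fihessbnd}, which is the required bound.

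\emph{MMSE.} Differentiating \eqref{eq:gimmse} under the integral sign --- legitimate because the Gaussian factor $\exp(-\tfrac12\|\xbf-\rbf_i\|^2_{\gammabf_i})$ in \eqref{eq:pirgam} dominates and forces sub-Gaussian tails on $p_i$ --- shows $\gbf_i(\cdot,\gammabf_i)$ is smooth and gives \eqref{eq:gimmseDeriv}, so $\Qbf_i=\Cov(\xbf_i|\rbf_i,\gammabf_i)$, which is symmetric. Write $p_i\propto e^{-V_i}$ with $V_i(\xbf)=f_i(\xbf)+\tfrac12\|\xbf-\rbf_i\|^2_{\gammabf_i}$, so that $c_{i1}\Ibf+\Gammabf_i\preceq\Hessian{V_i}(\xbf)\preceq c_{i2}\Ibf+\Gammabf_i$ for all $\xbf$. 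For the upper bound on $\Qbf_i$, the Brascamp--Lieb inequality applied to linear functionals $\xbf\mapsto a\tran\xbf$ gives $a\tran\Cov(\xbf_i|\rbf_i,\gammabf_i)a\le\Exp[a\tran\Hessian{V_i}(\xbf)^{-1}a]\le a\tran(c_{i1}\Ibf+\Gammabf_i)^{-1}a$, hence $\Qbf_i\preceq(c_{i1}\Ibf+\Gammabf_i)^{-1}$, i.e.\ $\Qbf_i^{-1}\succeq c_{i1}\Ibf+\Gammabf_i$. For the lower bound on $\Qbf_i$, apply the Cram\'er--Rao inequality to the location family $\{p_i(\cdot-\mubf)\}_{\mubf}$ with the estimator $\xbf$ of the shift, which is unbiased once $p_i$ is centered at its mean: the Fisher information is $\Exp[\nabla V_i(\xbf)\nabla V_i(\xbf)\tran]=\Exp[\Hessian{V_i}(\xbf)]\preceq c_{i2}\Ibf+\Gammabf_i$ (the identity following from integration by parts against $e^{-V_i}$, valid by the same tail estimate), whence $\Qbf_i=\Cov(\xbf_i|\rbf_i,\gammabf_i)\succeq(\Exp[\Hessian{V_i}(\xbf)])^{-1}\succeq(c_{i2}\Ibf+\Gammabf_i)^{-1}$, i.e.\ $\Qbf_i^{-1}\preceq c_{i2}\Ibf+\Gammabf_i$. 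Combining the two directions gives \eqref{eq:Qderivbnd}.

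The only genuinely non-routine step is the lower bound on the MMSE posterior covariance: the upper bound is the standard ``easy'' direction of Brascamp--Lieb, but the matching lower bound requires the Cram\'er--Rao / Fisher-information argument above (equivalently, a reverse Brascamp--Lieb estimate), together with a careful check that the centering of $p_i$ and the interchange of differentiation and integration are justified --- all of which follow from the strictly-quadratic-plus-convex structure of $V_i$. Everything else is bookkeeping via the chain rule and the already-derived Jacobian formulas \eqref{eq:gimapDeriv}--\eqref{eq:gimmseDeriv}.
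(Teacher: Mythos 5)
Your proof is correct. The paper itself does not supply a proof of Lemma~\ref{lem:convFn} but defers to \cite{Rangan:arxiv:15}, and your argument is essentially the standard one used there: the MAP case follows immediately from the Jacobian formula \eqref{eq:gimapDeriv} together with \eqref{eq:fihessbnd}, while the MMSE case combines the Brascamp--Lieb inequality (upper bound on the posterior covariance) with the Cram\'er--Rao bound for the location family generated by $p_i$ (lower bound), both controlled by the uniform Hessian bounds $c_{i1}\Ibf+\Gammabf_i \preceq \Hessian{f_i}(\xbf)+\Gammabf_i \preceq c_{i2}\Ibf+\Gammabf_i$, with constants independent of $\rbf_i$ as Assumption~\ref{as:firstOrder} requires.
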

\begin{proof}
\iftoggle{conference}{
See the full paper \cite{Fletcher:arxiv:16}.
}{
See \cite{Rangan:arxiv:15}.
}
\end{proof}
}

We then have the following convergence result.

\begin{theorem} \label{thm:convFO}
Consider the recursion \eqref{eq:betaFO}
\iftoggle{conference}{with the MAP or MMSE estimation functions with penalties
satisfying Assumption~\ref{as:firstOrder}}{
with functions $\gt_i(\cdot,\cdot)$ satisfying
Assumption~\ref{as:firstOrder}}
and arbitrary fixed values of $\gammabf_i > 0$, for $i=1,2$.
Then, from any initialization of $\betabf_i$, \eqref{eq:betaFO} converges to a unique fixed point
that is invariant to the choice of $\gammabf_i$.
\end{theorem}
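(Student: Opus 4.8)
The plan is to regard one full cycle of \eqref{eq:betaFO}---the sweep $(i,j)=(1,2)$ followed by $(i,j)=(2,1)$---as a single smooth self-map $T:\betabf_1\mapsto\betabf_1^+$ on $\R^N$, with $\betabf_2$ an intermediate quantity, and to show that $T$ is a strict contraction in an appropriately weighted Euclidean norm. The theorem then follows from the Banach fixed-point theorem (existence, uniqueness, and global geometric convergence) together with the fixed-point characterizations already available. Throughout, write $\Gammabf:=\Gammabf_1+\Gammabf_2=\Diag(\gammabf_1+\gammabf_2)\succ\zero$, and for each $i$ put $\Pbf_i(\betabf_i):=\Qbftilde_i(\betabf_i)^{-1}-\Gammabf_i$; by Assumption~\ref{as:firstOrder} this matrix is symmetric with $c_{i1}\Ibf\preceq\Pbf_i(\betabf_i)\preceq c_{i2}\Ibf$ uniformly in $\betabf_i$. (Via Lemma~\ref{lem:convFn} one may equivalently think of $\gt_i(\cdot,\gammabf_i)$ as the scaled proximal map of a strictly convex penalty $f_i$ with $\Hessian{f_i}=\Pbf_i$ at the relevant point, which is convenient for the conclusion.)

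First I would reduce the contraction estimate to a purely matrix-analytic one. Differentiating a half-step $\betabf_j\gets\Gammabf\,\gt_i(\betabf_i,\gammabf_i)-\betabf_i$ and invoking \eqref{eq:gimapDeriv}/\eqref{eq:gimmseDeriv} (equivalently \eqref{eq:Qderivbnd}), its Jacobian is $\Gammabf\Qbftilde_i-\Ibf=(\Gammabf_j-\Pbf_i)(\Pbf_i+\Gammabf_i)^{-1}$. Passing to the variable $\widetilde{\betabf}:=\Gammabf^{-1/2}\betabf$---i.e.\ working in the metric $\langle\ubf,\vbf\rangle:=\ubf\tran\Gammabf^{-1}\vbf$---symmetrizes each half-step Jacobian, which becomes $\Sbf_i:=\Gammabf^{1/2}(\Pbf_i+\Gammabf_i)^{-1}\Gammabf^{1/2}-\Ibf$; since $\Gammabf^{1/2}$ and $\Gammabf_i$ are diagonal, Assumption~\ref{as:firstOrder} pins its spectrum via $\Gammabf^{1/2}(c_{i2}\Ibf+\Gammabf_i)^{-1}\Gammabf^{1/2}-\Ibf\preceq\Sbf_i\preceq\Gammabf^{1/2}(c_{i1}\Ibf+\Gammabf_i)^{-1}\Gammabf^{1/2}-\Ibf$. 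The Jacobian of $T$ in these coordinates is $\Sbf_2\Sbf_1$, so a mean-value argument reduces everything to a uniform bound $\sup\|\Sbf_2\Sbf_1\|_2=:\kappa<1$ over all admissible symmetric $\Pbf_1,\Pbf_2$.

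The bound $\kappa<1$ is the crux, and I expect it to be the only real obstacle. The naive estimate $\|\Sbf_2\Sbf_1\|\le\|\Sbf_2\|\,\|\Sbf_1\|$ is too crude: the ``internal'' scalings $\gammabf_i$ appearing inside $\gt_i$ are unrelated to the ``external'' scaling $\Gammabf_1+\Gammabf_2$, so a single $\Sbf_i$ need not be non-expansive, and one must exploit a compensation between the two half-steps. The clean way to organize this is the monotone-operator viewpoint: each half-step is a (preconditioned) reflected resolvent of the strongly monotone operator $\nabla f_i$, so \eqref{eq:betaFO} is the Peaceman--Rachford splitting associated with $\min_\xbf f_1(\xbf)+f_2(\xbf)$, and strong monotonicity---guaranteed by $c_{i1}>0$---is exactly the hypothesis under which such a splitting contracts. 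Carrying this through while tracking the $\gammabf_i$-dependent preconditioner (and extracting an explicit $\kappa=\kappa(c_{i1},c_{i2},\gammabf_i)<1$) is where the work lies.

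Granting $\kappa<1$, the conclusion is routine. Banach gives a unique $\betabf_1^\star$ with $T(\betabf_1^\star)=\betabf_1^\star$ and $\betabf_1^k\to\betabf_1^\star$ geometrically from any start; the companion iterate $\betabf_2^k$ is the half-step image of $\betabf_1^k$ and hence also converges, so \eqref{eq:betaFO} converges to a unique fixed point. To see that this fixed point is invariant to the choice of $\gammabf_i$ in the operative sense, note that at any fixed point the argument of Lemma~\ref{lem:fixCon} gives $\gt_1(\betabf_1^\star,\gammabf_1)=\gt_2(\betabf_2^\star,\gammabf_2)=:\xbfhat$ and $\betabf_1^\star+\betabf_2^\star=\Gammabf\xbfhat$; writing $\betabf_i^\star=\nabla f_i(\xbfhat)+\Gammabf_i\xbfhat$ and summing yields $\nabla f_1(\xbfhat)+\nabla f_2(\xbfhat)=\zero$, so by Theorem~\ref{thm:mapFix} (resp.\ Theorem~\ref{thm:mmseFix} in the MMSE case) $\xbfhat$ is a stationary point of \eqref{eq:xmap}. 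Since Assumption~\ref{as:firstOrder} makes $f_1+f_2$ strictly convex, $\xbfhat$ is its unique minimizer and therefore does not depend on $\gammabf_1,\gammabf_2$; the vectors $\betabf_i^\star$ merely rescale with $\gammabf_i$, but the estimate $\xbfhat$ they encode is fixed.
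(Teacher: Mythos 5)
Your reduction is exactly the paper's: the same change of variables $\Gammabf^{-1/2}\betabf_i$, the same symmetrized half-step Jacobians $\Sbf_i=\Gammabf^{1/2}(\Pbf_i+\Gammabf_i)^{-1}\Gammabf^{1/2}-\Ibf$, and the same spectral envelopes obtained from Assumption~\ref{as:firstOrder}. But the one step that carries the entire content of the theorem---a uniform bound $\sup\|\Sbf_2\Sbf_1\|\le\kappa<1$---is precisely the step you do not supply: you correctly flag it as ``the crux,'' conjecture that a Peaceman--Rachford/monotone-operator argument will deliver it, and stop. As written, the proposal only restates the theorem as a contraction question; everything you do prove (the reduction, the Banach step, the invariance of $\xbfhat$ at the fixed point) is the routine part.

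Moreover, the route you sketch cannot close the gap for arbitrary $\gammabf_i$. The half-step $\betabf_j\gets\Gammabf\,\gt_i(\betabf_i,\gammabf_i)-\betabf_i$ is a genuine reflected resolvent of $\nabla f_i$ only when $\gammabf_1=\gammabf_2$: per eigendirection it acts by $(\gamma_j-p_i)/(\gamma_i+p_i)$ rather than $(\gamma-p_i)/(\gamma+p_i)$, so standard Peaceman--Rachford contraction theory does not apply off the shelf. The compensation between the two half-steps is governed by the scalar cross-ratio $\mu=\frac{(\gamma_2-p_1)(\gamma_1-p_2)}{(\gamma_1+p_1)(\gamma_2+p_2)}$, which is what the paper's own proof reduces to after sandwiching each $\Jbf_i$ between diagonal envelopes and multiplying componentwise. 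The identity $(\gamma_1+p_1)(\gamma_2+p_2)-(\gamma_2-p_1)(\gamma_1-p_2)=(\gamma_1+\gamma_2)(p_1+p_2)>0$ shows $\mu<1$ in \emph{signed} value, but not $|\mu|<1$: take $N=1$, quadratic $f_i$ with $p_1=\gamma_1=100$ and $p_2=\gamma_2=0.01$ (all hypotheses of the theorem hold) and you get $\mu=-9998/4\approx-2500$, so the recursion \eqref{eq:betaFO} diverges from every nonzero initialization. So your suspicion that the naive product bound is too crude is well founded, but the compensation you hope to extract is also absent in general: a correct contraction argument needs either $\gammabf_1=\gammabf_2$ (where your PR identification does work) or an additional coupling between $\gammabf_i$ and the curvature bounds $(c_{i1},c_{i2})$. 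For comparison, the paper completes this step by asserting $|\gamma_{2n}-q_{1n}||\gamma_{1n}-q_{2n}|<(\gamma_{1n}+q_{1n})(\gamma_{2n}+q_{2n})$ ``since all the terms are positive,'' which is the signed inequality misapplied to absolute values---so the obstacle you isolated is real, and neither your sketch nor the paper's argument disposes of it in the stated generality.
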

\begin{proof}
\iftoggle{conference}{
See the full paper \cite{Fletcher:arxiv:16}.
}{
See Appendix \ref{sec:convFOPf}.
}
\end{proof}

\section{Convergence of the Second-Order Terms for MAP Estimation}

\subsection{Convergence}
We now examine the convergence of the second-order terms $\etabf_i$ and $\gammabf_i$.
The convergence results that we present here apply only to the
case of MAP estimation~\eqref{eq:gimap} under strictly convex penalties
$f_i(\xbf)$ that satisfy the conditions in
\iftoggle{conference}{Assumption~\ref{as:firstOrder}}{Lemma~\ref{lem:convFn}}.
Furthermore, they assume that Algorithm~\ref{algo:GEC} is initialized using a pair $(\rbf_1,\gammabf_1)$ yielding $\gbf_1(\rbf_1,\gammabf_1)=\xbfhat$, where $\xbfhat$ is a local minimizer of \eqref{eq:xmap}.

\begin{theorem} \label{thm:convMAP}  Consider GEC (Algorithm~\ref{algo:GEC})
under the MAP estimation functions~\eqref{eq:gimap} with penalties $f_i(\xbf)$ that
are strictly convex functions satisfying
\iftoggle{conference}{Assumption~\ref{as:firstOrder}.}{
the assumptions in Lemma~\ref{lem:convFn}.}
Construct $\rbf_1^0$ and $\gammabf_1^0$ as follows:
\begin{quote}
Choose arbitrary $\gammabf_1^0,\gammabf_2^0>0$ and run Algorithm~\ref{algo:GEC}
under \emph{fixed} $\gammabf_i=\gammabf_i^0$
and \emph{fixed} $\etabf_i=\gammabf_1^0+\gammabf_2^0$ (for $i=1,2$)
until convergence (as guaranteed by Theorem~\ref{thm:convFO}).
Then record the final value of $\rbf_1$ as $\rbf_1^0$.
\end{quote}
Finally, run Algorithm~\ref{algo:GEC} from the initialization
$(\rbf_1,\gammabf_1)=(\rbf_1^0,\gammabf_1^0)$
without keeping $\gammabf_i$ and $\etabf_i$ fixed.
\begin{enumerate}[(a)]
\item For all subsequent iterations, we will have
$\xbfhat_i=\xbfhat$, where $\xbfhat$ is the unique global minimizer of \eqref{eq:xmap}.
\item If $\dbf(\gammabf)$ is either the vector-valued or uniform
diagonalization operator,
then the second-order terms $\gammabf_i$ will converge to unique fixed points.
\end{enumerate}
\end{theorem}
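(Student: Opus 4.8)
The plan is to establish (a) and (b) in sequence, with (a) riding on the first‑order theory already developed and (b) reducing to an autonomous recursion in the curvature variables.

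\emph{Part (a).} I would first check that the warm‑up phase delivers $\rbf_1^0$ with $\gbf_1(\rbf_1^0,\gammabf_1^0)=\xbfhat$, where $\xbfhat$ is the \emph{unique} global minimizer of~\eqref{eq:xmap}. Running Algorithm~\ref{algo:GEC} with $\gammabf_i,\etabf_i$ frozen is precisely the recursion~\eqref{eq:betaFO}, which by Theorem~\ref{thm:convFO} converges to a fixed point; at that fixed point one reads off $\gt_1(\betabf_1)=\gt_2(\betabf_2)=(\betabf_1+\betabf_2)\vdiv(\gammabf_1+\gammabf_2)$, which after the substitution $\betabf_i=\gammabf_i\vmult\rbf_i$ reproduces the consistency relations of Lemma~\ref{lem:fixCon}: $\xbfhat_1=\xbfhat_2=\xbfhat$ and $\xbfhat=(\gammabf_1\vmult\rbf_1+\gammabf_2\vmult\rbf_2)\vdiv(\gammabf_1+\gammabf_2)$. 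Since the proof of Theorem~\ref{thm:mapFix} uses only those relations together with the stationarity conditions defining~\eqref{eq:gimap}, it applies verbatim and gives $\nabla f_1(\xbfhat)+\nabla f_2(\xbfhat)=\zero$; strict convexity of $f_1+f_2$ then forces $\xbfhat$ to be the unique global minimizer. Starting the un‑frozen algorithm at $(\rbf_1^0,\gammabf_1^0)$, I would induct on the half‑steps. If the current half‑step has $\xbfhat_i=\gbf_i(\rbf_i,\gammabf_i)=\xbfhat$, then $\Gammabf_i(\rbf_i-\xbfhat)=\nabla f_i(\xbfhat)$, and combining lines~\ref{line:gamj}--\ref{line:rj} yields
\[
    \Gammabf_j(\rbf_j-\xbfhat)=\Gammabf_i(\xbfhat-\rbf_i)=-\nabla f_i(\xbfhat)=\nabla f_j(\xbfhat),
\]
the last step using $\nabla f_1(\xbfhat)+\nabla f_2(\xbfhat)=\zero$. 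This is exactly the stationarity condition of $\gbf_j(\rbf_j,\gammabf_j)$ at $\xbfhat$, and since $f_j$ is strictly convex the minimizer is unique, so $\xbfhat_j=\xbfhat$. The only side condition to verify is $\gammabf_j>0$ so that $\gbf_j$ is defined: from~\eqref{eq:gimapDeriv} and $c_{j1}\Ibf\preceq\Hessian{f_j}(\xbfhat)\preceq c_{j2}\Ibf$ one gets $c_{i1}\Ibf+\Gammabf_i\preceq\Qbf_i^{-1}\preceq c_{i2}\Ibf+\Gammabf_i$, hence every component of $\etabf_i$ lies in $[c_{i1}+\gamma_{i,n},\,c_{i2}+\gamma_{i,n}]$ and $\gamma_{j,n}=\eta_{i,n}-\gamma_{i,n}\in[c_{i1},c_{i2}]$; together with $\gammabf_1^0>0$ this keeps all curvature terms in a fixed positive box.

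\emph{Part (b).} By part (a) the estimate $\xbfhat_i\equiv\xbfhat$ is frozen, so $\Pbf_i:=\Hessian{f_i}(\xbfhat)$ is a fixed symmetric matrix with $c_{i1}\Ibf\preceq\Pbf_i\preceq c_{i2}\Ibf$, and by~\eqref{eq:gimapDeriv} the sensitivity is $\Qbf_i=(\Pbf_i+\Gammabf_i)^{-1}$. Hence the second‑order part of GEC decouples from $\rbf_i$ and $\xbfhat_i$ and becomes the autonomous recursion $\gammabf_j\gets\phi_i(\gammabf_i)-\gammabf_i$ with $\phi_i(\gammabf_i):=\one\vdiv\dbf\!\big((\Pbf_i+\Diag(\gammabf_i))^{-1}\big)$, for $(i,j)\in\{(1,2),(2,1)\}$, which by the box estimate above stays in a compact subset of the positive orthant. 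I would prove convergence from two structural facts. First, $\phi_i$ is componentwise nondecreasing in $\gammabf_i$ (raising $\gammabf_i$ in the positive‑semidefinite order shrinks $(\Pbf_i+\Gammabf_i)^{-1}$ and its diagonal), and a short differentiation shows the half‑step map $\gammabf_i\mapsto\phi_i(\gammabf_i)-\gammabf_i$ has an entrywise nonnegative Jacobian (with vanishing diagonal). Second, along any nonnegative direction the scalar map $t\mapsto[(\Pbf_i+\Gammabf_i+t\Diag(\vbf))^{-1}]_{nn}$ is a Stieltjes function, so its reciprocal is concave; hence each component of $\phi_i(\gammabf_i)-\gammabf_i$ is concave along nonnegative rays. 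The composed full iteration inherits monotonicity and ray‑concavity, and a monotone, ray‑concave self‑map of an invariant compact box with interior fixed point converges globally to that (unique) fixed point. In the uniform‑diagonalization case this is literally elementary: $\phi_i$ collapses to the scalar map $\gamma_i\mapsto N/\Tr[(\Pbf_i+\gamma_i\Ibf)^{-1}]$, whose concavity is Cauchy--Schwarz applied to the Stieltjes transform, so $\gamma_i\mapsto\phi_i(\gamma_i)-\gamma_i$ is an increasing concave self‑map of $(0,\infty)$ with a unique fixed point and globally monotone convergence, as is the composition. Finally, for vector‑valued $\dbf$ with diagonal $\Pbf_i$, or uniform $\dbf$ with free $\Pbf_i$, Theorem~\ref{thm:curve} identifies the limit $\etabf=\gammabf_1+\gammabf_2$ with $\etabfhat$ of~\eqref{eq:etahat}.

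The hard part is the vector‑valued case of (b): upgrading ``monotone and concave along nonnegative rays'' into a bona fide global‑convergence theorem for the two‑step composition, i.e.\ bounding the spectral radius of its Jacobian at the fixed point strictly below one (the scalar/uniform case being easy). I expect to handle this either by a Perron--Frobenius / standard‑interference‑function argument in the spirit of Yates---using box invariance for boundedness and the strict curvature bounds $c_{i1}\Ibf\preceq\Pbf_i\preceq c_{i2}\Ibf$ to extract strict contraction---or by an energy‑function route, exhibiting the $\log\det$‑type potential $\bar J(\gammabf_1,\gammabf_2)=\sum_n\log(\gamma_{1,n}+\gamma_{2,n})-\log\det(\Pbf_1+\Gammabf_1)-\log\det(\Pbf_2+\Gammabf_2)$, whose coordinatewise stationarity conditions are exactly the fixed‑point equations $\etabf=\phi_i(\gammabf_i)$, and showing the recursion is monotone in $\bar J$.
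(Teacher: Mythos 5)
Your proposal is correct and follows essentially the same route as the paper: part (a) is the identical half-step induction on the first-order conditions $\nabla f_i(\xbfhat)+\gammabf_i\vmult(\xbfhat-\rbf_i)=0$, and part (b) reduces to the same autonomous curvature recursion $\gammabf_j\gets\mathbf{1}\vdiv\dbf\bigl((\Pbf_i+\Gammabf_i)^{-1}\bigr)-\gammabf_i$. The first of your two proposed finishes for the vector-valued case---verifying positivity, monotonicity, scalability (your ray-concavity plays the role of the paper's sub-multiplicativity $G(\alpha\gammabf)\leq\alpha G(\gammabf)$), and boundedness, then invoking Yates' standard-interference-function theorem---is exactly the argument the paper gives, so the step you flag as ``the hard part'' is handled there by precisely that citation rather than by any additional spectral-radius or energy-function machinery.
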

\begin{proof}
\iftoggle{conference}{
See the full paper \cite{Fletcher:arxiv:16}.
}{See Appendix~\ref{sec:convMAPPf}.
}
\end{proof}

\section{Relation to the Replica Prediction}

Consider the separable SLR problem described in Section~\ref{sec:ex}
for any matrix $\Abf$ and noise precision $\gamma_w>0$.
Consider GEC under the penalty functions \eqref{eq:fawgn},
MMSE estimation~\eqref{eq:gimmse}, and uniform diagonalization~\eqref{eq:Dunif}.
Thus, $\gammabf_i$ will have identical components of scalar value $\gamma_i$.

Suppose that $b_1(\xbf)$ is the belief estimate generated at a fixed point of GEC.
Since $p(\xbf)$ in \eqref{eq:pxindep} is separable, \eqref{eq:b1xr} implies
\[
    \iftoggle{conference}{\textstyle}{}
    b_1(\xbf;\rbf_1,\gamma_1)
    \propto \prod_{n=1}^N p(x_n)e^{-\gamma_1(x_n-r_{1n})^2/2} .
\]
In the sequel, let
$\Exp(\cdot|r_{1n},\gamma_1)$ and $\var(\cdot |r_{1n},\gamma_1)$
denote the mean and variance with respect to the marginal density
\beq \label{eq:b1n}
    b_1(x_n|r_{1n},\gamma_1) \propto p(x_n)e^{-\gamma_1(x_n-r_{1n})^2/2}.
\eeq
From \eqref{eq:MMconmean}, the GEC estimate $\xbfhat$
satisfies $\xhat_n = \Exp(x_n|r_{1n},\gamma_1)$%
, which is the posterior mean under the estimated density \eqref{eq:b1n}.
Also, from \eqref{eq:MMconvar} and the definition of the uniform diagonal operator
\eqref{eq:Dunif}, the components of $\etabf$ are identical and satisfy
\beq \label{eq:etafixr}
    \iftoggle{conference}{\textstyle}{}
    \eta^{-1} = \frac{1}{N} \Tr(\Cov(\xbf|\rbf_1,\gamma_1))
        = \frac{1}{N} \sum_{n=1}^N \var(x_n|r_{1n},\gamma_1),
\eeq
which is the average of the marginal posterior variances.
Equivalently, $\eta^{-1}$ is the average estimation MSE,
\[
    \iftoggle{conference}{\textstyle}{}
    \eta^{-1}
    = \frac{1}{N} \sum_{n=1}^N \Exp\left[ (x_n-\xhat_n)^2|r_{1n},\gamma_1\right].
\]

We will show that the value for $\eta$ can be
characterized in terms of the singular values of $\Abf$.
Let $\Ybf := \gamma_w\Abf\tran\Abf$, and let $S_{\Ybf}(\omega)$ denote its
Stieltjes Transform \eqref{eq:stieltjes} and $R_{\Ybf}(\omega)$ its $R$-transform \eqref{eq:rtrans}.
We then have the following.

\begin{theorem} \label{thm:fixr}  For the above problem, at any fixed point
of GEC (Algorithm~\ref{algo:GEC}), $\eta$ and $\gamma_1$ satisfy the fixed-point equations
\beq \label{eq:fixr}
    \iftoggle{conference}{\textstyle}{}
    \gamma_1 = R_{\Ybf}(-\eta^{-1}), \quad
    \eta^{-1} = \frac{1}{N} \sum_{n=1}^N \var(x_n|r_{1n},\gamma_1),
\eeq
where $\var( x_n | r_{1n},\gamma_1)$ is the posterior variance from
the density in \eqref{eq:b1n}.
\end{theorem}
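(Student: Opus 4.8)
The plan is to observe that the second equation in \eqref{eq:fixr} is exactly \eqref{eq:etafixr}, which was already obtained from the moment-matching constraint \eqref{eq:MMconvar} (equivalently, from the $b_1$-characterization in Theorem~\ref{thm:mmseFix}) together with $\xhat_n = \Exp(x_n|r_{1n},\gamma_1)$. So the real content is the first relation, $\gamma_1 = R_{\Ybf}(-\eta^{-1})$, and I would derive it by combining the consistency identities of Lemma~\ref{lem:fixCon} with the special (quadratic) structure of the second penalty $f_2(\xbf)=\frac{\gamma_w}{2}\|\ybf-\Abf\xbf\|^2$.

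First I would invoke Lemma~\ref{lem:fixCon}: at the fixed point, $\etabf=\gammabf_1+\gammabf_2$, which for the uniform diagonalization reads $\eta=\gamma_1+\gamma_2$ in the scalar components. Next I would compute the scaled Jacobian $\Qbf_2$ at the fixed point. Because $f_2$ is quadratic, the conditional density $p_2(\xbf|\rbf_2,\gamma_2)$ of \eqref{eq:pirgam} is Gaussian with covariance $(\gamma_w\Abf\tran\Abf+\gamma_2\Ibf)^{-1}=(\Ybf+\gamma_2\Ibf)^{-1}$, independent of $\rbf_2$; by \eqref{eq:gimmseDeriv} this is precisely $\Qbf_2$. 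Applying the uniform diagonalization \eqref{eq:Dunif} and the definition \eqref{eq:stieltjes} of the Stieltjes transform gives
\[
  \dbf(\Qbf_2)=N^{-1}\Tr\!\left[(\Ybf+\gamma_2\Ibf)^{-1}\right]\mathbf{1}=S_{\Ybf}(-\gamma_2)\,\mathbf{1},
\]
so the fixed-point form of line~\ref{line:etai} in the $i=2$ branch forces $\eta=1\vdiv\dbf(\Qbf_2)=1/S_{\Ybf}(-\gamma_2)$, i.e. $S_{\Ybf}(-\gamma_2)=\eta^{-1}$.

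Finally I would invert this relation and insert it into the definition \eqref{eq:rtrans} of the $R$-transform. Since $\Ybf\succeq 0$ and $\gamma_2>0$, $S_{\Ybf}$ is a strictly monotone bijection from the negative real axis onto its range, so $-\gamma_2=S_{\Ybf}^{-1}(\eta^{-1})$ is well defined, and
\[
  R_{\Ybf}(-\eta^{-1})=S_{\Ybf}^{-1}(\eta^{-1})-\frac{1}{-\eta^{-1}}=-\gamma_2+\eta=-\gamma_2+(\gamma_1+\gamma_2)=\gamma_1,
\]
which is the claimed identity; the second equation of \eqref{eq:fixr} is \eqref{eq:etafixr}. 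The computation itself is essentially mechanical, so the only points needing care are the sign/branch bookkeeping in \eqref{eq:stieltjes}--\eqref{eq:rtrans} and the verification that the arguments $-\gamma_2$ and $-\eta^{-1}$ lie in the domains where $S_{\Ybf}$ and its functional inverse are defined (guaranteed by $\gamma_i>0$ and $\Ybf\succeq 0$). A secondary subtlety is applying the fixed-point version of line~\ref{line:etai} with the correct index: it is the $i=2$ branch that couples $\eta$ to $\gamma_2$ through $\Ybf$, whereas the $i=1$ branch enters only through the separable prior and yields the variance expression in \eqref{eq:etafixr}.
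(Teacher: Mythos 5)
Your proposal is correct and follows essentially the same route as the paper's proof: compute the Gaussian covariance $\Qbf_2=(\Ybf+\gamma_2\Ibf)^{-1}$ from the quadratic penalty, identify $\eta^{-1}=S_{\Ybf}(-\gamma_2)$ via uniform diagonalization, and then use $\gamma_1=\eta-\gamma_2$ together with the definition of the $R$-transform, with the second equation of \eqref{eq:fixr} being exactly \eqref{eq:etafixr}. The only difference is that you make the domain/invertibility bookkeeping for $S_{\Ybf}$ explicit, which the paper leaves implicit.
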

\begin{proof}
\iftoggle{conference}{
See the full paper \cite{Fletcher:arxiv:16}.
}{
See Appendix~\ref{sec:fixrPf}.
}
\end{proof}

It is interesting to compare this result with that in \cite{tulino2013support},
which considers exactly this estimation problem in the limit of large $N$
with certain unitarily invariant random matrices $\Abf$ and i.i.d.\ $x_n$.
That work uses
a replica symmetric (RS) analysis to predict that the asymptotic
MSE $\eta^{-1}$ satisfies the fixed point equations
\beq \label{eq:fixrep}
    \gamma_1 = R_{\Ybf}(-\eta^{-1}), \quad
    \eta^{-1} =  \Exp\left[
        \var( x_n | r_{1n}, \gamma_1 ) \right],
\eeq
where the expectation is over
$r_{1n}=x_n + {\mathcal N}(0,\gamma_1^{-1})$.
This Gaussian distribution is exactly the predicted likelihood
$p(r_{1n}|x_n)$ in \eqref{eq:prx}.  Hence, if $x_n$ is i.i.d., and $r_{1n}$
follows the likelihood in \eqref{eq:prx}, then the MSE predicted from the
GEC estimated posterior must satisfy the same fixed point equation as the
minimum MSE predicted from the replica method in the limit as $N \arr \infty$.
In particular, if
this equation has a unique fixed point, then the GEC-predicted MSE
will match the minimum MSE as given by the replica method.

Of course, these arguments are not a rigorous proof of optimality.
The analysis relies on the GEC model $p(\xbf|\rbf_1)$ with a particular choice of prior on $\rbf_1$.
Also, the replica method itself is not rigorous.
Nevertheless, the arguments do provide some hope that GEC is optimal
in certain asymptotic and random regimes.

\iftoggle{conference}{}{

\section{Relation to ADMM} \label{sec:admm}

We conclude by relating GEC to the well-known
alternating direction method of multipliers (ADMM)
\cite{BoydPCPE:09,Esser:JIS:10,Chambolle:JMIV:11,He:JIS:12}.
Consider the MAP minimization problem \eqref{eq:xmap}.
To solve this via ADMM, we rewrite the minimization as a constrained
optimization
\beq \label{eq:xmapsplit}
    \min_{\xbf_1,\xbf_2} f_1(\xbf_1) + f_2(\xbf_2) \mbox{ s.t. }
        \xbf_1 = \xbf_2.
\eeq
The division of the variable $\xbf$ into two variables $\xbf_1$ and
$\xbf_2$ is often called \emph{variable splitting}.
Corresponding to the constrained optimization \eqref{eq:xmapsplit},
define the augmented Lagrangian,
\begin{align}
    L_\gamma(\xbf_1,\xbf_2,\sbf)
    &= f_1(\xbf_1) + f_2(\xbf_2) + \sbf\tran(\xbf_1 -\xbf_2)
    \nonumber\\&\quad
       + \frac{\gamma}{2}\|\xbf_1-\xbf_2\|^2,
        \label{eq:Ladmm}
\end{align}
where $\sbf$ is a dual vector and $\gamma > 0$ is an adjustable weight.  The ADMM
algorithm for this problem iterates the steps
\begin{subequations} \label{eq:admmStd}
\begin{align}
    \xbfhat_1  &\gets \argmin_{\xbf_1} L_\gamma(\xbf_1,\xbfhat_2,\sbf)  \\
    \xbfhat_2  &\gets \argmin_{\xbf_2} L_\gamma(\xbfhat_1,\xbf_2,\sbf) \\
    \sbf &\gets \sbf + \gamma(\xbfhat_1-\xbfhat_2) ,
\end{align}
\end{subequations}
where it becomes evident that $\gamma$ can also be interpreted as a step size.
The benefit of the ADMM method is that the minimizations involve
only one penalty, $f_1(\xbf)$ or $f_2(\xbf)$, at a time.  A classic result
\cite{BoydPCPE:09} shows that if the penalties $f_i(\xbf)$ are convex
(not necessarily smooth) and \eqref{eq:xmap} has a unique minima, then
the ADMM algorithm will converge to that minima.
Our next result relates MAP-GEC to ADMM.

\begin{theorem} \label{thm:admm} Consider GEC
(Algorithm~\ref{algo:GEC}) with the MAP estimation functions~\eqref{eq:gimap},
but with fixed second-order terms,
\beq \label{eq:gamadmm}
    \gamma_1 = \gamma_2 = \gamma, ~ \eta = \gamma_1+\gamma_2=2\gamma
\eeq
for some fixed scalar value $\gamma > 0$.
Define
\beq \label{eq:sadmm}
    \sbf_i^k = \gamma(\xbfhat_i^k- \rbf_i^k) .
\eeq
Then, the outputs of GEC satisfy
\begin{subequations} \label{eq:gecadmm}
\begin{align}
    \xbfhat_1^k &= \argmin_{\xbf_1} L_\gamma(\xbf_1,\xbfhat_2^k,\sbf_1^k)
         \label{eq:x1admm} \\
    \sbf_2^k &= \sbf_1^k + \gamma(\xbfhat_1^k-\xbfhat_2^k)
        \label{eq:s2admm} \\
    \xbfhat_2^{\kp1} &= \argmin_{\xbf_2} L_\gamma(\xbfhat_1^k,\xbf_2,\sbf_2^k)
         \label{eq:x2admm} \\
    \sbf_1^{\kp1} &= \sbf_2^k + \gamma(\xbfhat_1^k-\xbfhat_2^{\kp1})
        \label{eq:s1admm} .
\end{align}
\end{subequations}
\end{theorem}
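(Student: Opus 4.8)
The plan is to specialize Algorithm~\ref{algo:GEC} to the prescribed fixed second-order terms, read off the resulting recursion on the first-order terms, and match it term-by-term against the symmetric (two dual updates per sweep) ADMM iteration \eqref{eq:gecadmm}. The structural fact underneath is that MAP-GEC with a frozen $\gamma$ is Peaceman--Rachford splitting, which is equivalent to exactly this symmetric ADMM.

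First I would substitute $\gammabf_1=\gammabf_2=\gamma\one$ and $\etabf_i=2\gamma\one$ into lines~\ref{line:gamj}--\ref{line:rj} of Algorithm~\ref{algo:GEC}. Line~\ref{line:gamj} then holds automatically, and line~\ref{line:rj} collapses to the reflection $\rbf_j\gets 2\xbfhat_i-\rbf_i$. Combined with the MAP estimation function \eqref{eq:gimap} at scale $\gamma$, that is $\xbfhat_i=\gbf_i(\rbf_i,\gamma)=\prox_{f_i/\gamma}(\rbf_i)$ characterized by $\zero\in\partial f_i(\xbfhat_i)+\gamma(\xbfhat_i-\rbf_i)$ (so the argument also handles nonsmooth penalties, such as the indicator in the GLM example), one sweep of GEC produces, in order, $\xbfhat_1^k=\gbf_1(\rbf_1^k,\gamma)$, $\rbf_2^k=2\xbfhat_1^k-\rbf_1^k$, $\xbfhat_2^k=\gbf_2(\rbf_2^k,\gamma)$, and $\rbf_1^{k+1}=2\xbfhat_2^k-\rbf_2^k$---precisely Peaceman--Rachford splitting for $\min_{\xbf}\,f_1(\xbf)+f_2(\xbf)$.

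Next I would turn the two prox evaluations into the primal ADMM steps. Completing the square in $\xbf_1$ in the augmented Lagrangian \eqref{eq:Ladmm} gives the identity $\argmin_{\xbf_1}L_\gamma(\xbf_1,\xbf_2,\sbf)=\gbf_1(\xbf_2-\gamma^{-1}\sbf,\gamma)$, and symmetrically $\argmin_{\xbf_2}L_\gamma(\xbf_1,\xbf_2,\sbf)=\gbf_2(\xbf_1+\gamma^{-1}\sbf,\gamma)$. With $\sbf_i^k=\gamma(\xbfhat_i^k-\rbf_i^k)$, the reflection rewrites as $\rbf_2^k=\xbfhat_1^k+\gamma^{-1}\sbf_1^k$ and $\rbf_1^{k+1}=\xbfhat_2^k+\gamma^{-1}\sbf_2^k$, so the two GEC prox evaluations take the form of the $\xbf_2$- and $\xbf_1$-minimizations \eqref{eq:x2admm} and \eqref{eq:x1admm} of $L_\gamma$. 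The two dual updates \eqref{eq:s2admm}--\eqref{eq:s1admm} then come out purely algebraically by eliminating $\rbf_i$: inserting $\rbf_2^k=2\xbfhat_1^k-\rbf_1^k$ into the definition of $\sbf_2^k$ and using that of $\sbf_1^k$ expresses $\sbf_2^k$ through $\sbf_1^k$ and the primal residual $\xbfhat_1^k-\xbfhat_2^k$, which is \eqref{eq:s2admm}, and the same manipulation on $\rbf_1^{k+1}=2\xbfhat_2^k-\rbf_2^k$ gives \eqref{eq:s1admm}. Carrying this through for both $(i,j)=(1,2)$ and $(2,1)$ gives all of \eqref{eq:gecadmm}.

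The step I expect to be the main obstacle is the careful bookkeeping of signs and iteration indices in these identifications. GEC transports a single vector $\rbf_i$ between its two half-sweeps, whereas \eqref{eq:gecadmm} transports a primal--dual pair, so one must verify that the single reflection $\rbf_j\gets 2\xbfhat_i-\rbf_i$ simultaneously encodes (i) a prox whose center is $\xbf_j\mp\gamma^{-1}\sbf$, with the sign forced by the $\sbf\tran(\xbf_1-\xbf_2)$ term of \eqref{eq:Ladmm}, and (ii) a dual ascent step of step size exactly $\gamma$; this is precisely what separates the symmetric, two-dual-update form \eqref{eq:gecadmm} from the standard single-update ADMM \eqref{eq:admmStd}. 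It is cleanest to pin down the conventions on the $(i,j)=(1,2)$ half-sweep and then invoke the $1\leftrightarrow 2$ symmetry of the loop; once that is done, the remaining manipulations are routine substitutions.
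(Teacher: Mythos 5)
Your proposal is correct and follows essentially the same route as the paper's own proof (Appendix~\ref{sec:admmPf}): freezing $\gammabf_1=\gammabf_2=\gamma\one$ and $\etabf_i=2\gamma\one$ collapses line~\ref{line:rj} to the reflection $\rbf_j \gets 2\xbfhat_i-\rbf_i$ (the paper's \eqref{eq:r2admm}), completing the square in $L_\gamma$ identifies the two prox evaluations with \eqref{eq:x1admm} and \eqref{eq:x2admm}, and eliminating $\rbf_i$ via \eqref{eq:sadmm} yields the dual updates \eqref{eq:s2admm} and \eqref{eq:s1admm}. Your Peaceman--Rachford framing packages the identical computation, and you correctly single out the sign/index bookkeeping (the opposite signs with which $\sbf$ enters the two prox centers) as the only delicate step.
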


Note that in the above description, we have been explicit about
the iteration number $k$ to be precise about the timing of the updates.
We see that a variant of ADMM can be interpreted as a special case
of GEC with particular, fixed step sizes.  This variant differs from
the standard ADMM updates by having two updates of the dual parameters
in each iteration.  Alternatively, we can think of GEC as a particular
variant of ADMM that uses an \emph{adaptive} step size. From our discussion
above, we know that the GEC algorithm can be interpreted as adapting the
step-size values $\gamma^k$ to match the local ``curvature" of the objective
function.
}

\iftoggle{conference}{
}{

\appendices

\section{Proof of Theorem~\ref{thm:mapFix}} \label{sec:mapFixPf}

Since $\xbfhat = \xbfhat_i = \gbf_i(\rbf_i,\gammabf_i)$, and $\gbf_i(\cdot,\cdot)$
is the MAP estimation function~\eqref{eq:gimap}, we have
\[
    \xbfhat = \argmin_{\xbf} \left[ f_i(\xbf) + \frac{1}{2}\|\xbf-\rbf_i\|^2_{\gammabf_i} \right].
\]
Hence,
\[
    \nabla f_i(\xbfhat) + \gammabf_i \vmult (\xbfhat-\rbf_i) = 0 ,
\]
where $\nabla f_i(\xbfhat)$ denotes the gradient of $f_i(\xbf)$ at $\xbf=\xbfhat$.
Summing over $i=1,2$ and applying \eqref{eq:xhatfix},
\[
    \nabla f_1(\xbfhat) + \nabla f_2(\xbfhat)
    = (\gammabf_1 \vmult \rbf_1 + \gammabf_2 \vmult \rbf_2)  -(\gammabf_1+\gammabf_2) \vmult \xbfhat
    = 0,
\]
which shows that $\xbfhat$ is a critical point of \eqref{eq:xmap}.

\section{Proof of Theorem~\ref{thm:curve}} \label{sec:curvePf}

Using \eqref{eq:gimapDeriv}, the fixed points of line~\ref{line:gamj} of Algorithm~\ref{algo:GEC} must satisfy
\beq \label{eq:gammappf}
    \gammabf_j = \mathbf{1}\vdiv\dbf(\Qbf_i) - \gammabf_i, \quad
    \Qbf_i =  (\Pbf_i + \Gammabf_i)^{-1}.
\eeq
Now, to prove part (a) of the theorem, suppose $\Pbf_i = \Diag(\pbf_i)$ for some vector $\pbf_i$.
Using \eqref{eq:gammappf} with the vector-valued diagonalization $\dbf(\Qbf) = \diag(\Qbf)$,
\begin{align*}
    \gammabf_j = \mathbf{1}\vdiv\diag\left[(\Pbf_i+\Gammabf_i)^{-1}\right] - \gammabf_i
    = \pbf_i+\gammabf_i-\gammabf_i
    = \pbf_i.
\end{align*}
Hence,
\[
    \etabf = \gammabf_1+\gammabf_2
    = \mathbf{1}\vdiv\diag\left[ (\Pbf_1+\Pbf_2)^{-1} \right] = \etabfhat .
\]

In part (b) of the theorem, we use uniform diagonalization \eqref{eq:Dunif}.
Recall that $\etabf$ has identical components, which we shall call $\eta$.
Likewise, $\gammabf_i$ are vectors with identical components $\gamma_i$.
Then from line~\ref{line:etai} of Algorithm~\ref{algo:GEC},
\[
    \eta^{-1}
    = \frac{\Tr(\Qbf_i)}{N}
    = \frac{\Tr\left((\Pbf_i + \gamma_i\Ibf)^{-1}\right)}{N}
    = S_{\Pbf_i}(-\gamma_i),
\]
which shows that $\gamma_i = -S_{\Pbf_i}^{-1}(\eta^{-1})$.
From line~\ref{line:gamj},
\[
    \gamma_j = \eta - \gamma_i = \eta + S_{\Pbf_i}^{-1}(\eta^{-1}) = R_{\Pbf_i}(-\eta^{-1}).
\]
Thus, using the freeness property \eqref{eq:free},
\begin{align*}
    \eta
    &= \gamma_1 + \gamma_2 = R_{\Pbf_1}(-\eta^{-1}) + R_{\Pbf_2}(-\eta^{-1}) \\
    &= R_{\Pbf_1+\Pbf_2}(-\eta^{-1}) = S_{\Pbf_1+\Pbf_2}^{-1}(\eta^{-1}) + \eta,
\end{align*}
and hence $S_{\Pbf_1+\Pbf_2}(0) =\eta^{-1}$. So,
\[
    \eta^{-1} = \frac{1}{N} \Tr\left((\Pbf_1+\Pbf_2)^{-1}\right) = \left[ \dbf\left((\Pbf_1+\Pbf_2)^{-1}\right) \right]_1
    =\widehat{\eta}^{-1}.
\]

\section{Proof of Theorem~\ref{thm:mmseFix}} \label{sec:mmseFixPf}

Corresponding to the objective function \eqref{eq:JBFE} with
moment matching constraints \eqref{eq:MMcon}, define the
Lagrangian,
\begin{align}
    \lefteqn{ L(b_1,b_2,q,\betabf,\gammabf) } \nonumber \\
    &:= J(b_1,b_2,q)
     - \sum_{i=1}^2 \betabf_i\tran\left[
        \Exp(\xbf|b_1) - \Exp(\xbf|q) \right]
    \nonumber\\&\quad
     + \sum_{i=1}^2 \gammabf_i\tran\left[
        \dbf(\Exp(\xbf\xbf\tran|b_1)) - \dbf(\Exp(\xbf\xbf\tran|q)) \right].
    \label{eq:LagBFE}
\end{align}
To show that $b_i$ and $q$ are stationary points of the constrained optimization,
we need to show that they satisfy the moment matching constraints \eqref{eq:MMcon}
and
\begin{align}
    b_i &= \argmin_{b_i} L(b_1,b_2,q,\betabf,\gammabf) \label{eq:bifixlag} \\
    q &= \argmax_{q} L(b_1,b_2,q,\betabf,\gammabf) \label{eq:qfixlag}
\end{align}

To prove \eqref{eq:bifixlag}, first observe that the Lagrangian \eqref{eq:LagBFE}
can be written as
\begin{align}
    L(b_1,b_2,q,\betabf,\gammabf)
    &= D(b_i\|e^{-f_i}) - \betabf_i\tran\Exp(\xbf|b_i)
    \nonumber\\&\quad
    + \frac{1}{2}\gammabf_i\tran\dbf(\Exp(\xbf\xbf\tran|b_i)) + \mbox{const},
    \label{eq:LagBFEbi}
\end{align}
where the constant terms do not depend on $b_i$.
Now, for the vector-valued diagonalization operator \eqref{eq:Dvec}, we have
\beq \label{eq:gamd}
    \gammabf_i\tran\dbf(\Exp(\xbf\xbf\tran))
    = \Exp\left[ \|\xbf\|^2_{\gammabf_i} \right].
\eeq
The same identity also holds when $\gammabf_i$ is a constant vector
and $\dbf(\cdot)$ is the uniform diagonalization operator \eqref{eq:Dunif}.
Substituting \eqref{eq:gamd} into \eqref{eq:LagBFEbi}, we obtain
\begin{align}
    \lefteqn{ L(b_1,b_2,q,\betabf,\gammabf) }\nonumber \\
        &= D(b_i\|e^{-f_i}) - \betabf_i\tran\Exp(\xbf|b_i)
         + \frac{1}{2}\Exp\left[ \|\xbf\|^2_{\gammabf_i} \right] + \mbox{const} \nonumber \\
        &\stackrel{(a)}{=}
            D(b_i\|e^{-f_i}) + \frac{1}{2}\Exp\left[ \|\xbf-\rbf_i\|^2_{\gammabf_i} \right]
        + \mbox{const}  \nonumber \\
        &\stackrel{(b)}{=}
            -H(b_i) + \Exp\left[ f_i(\xbf) + \frac{1}{2}\|\xbf-\rbf_i\|^2_{\gammabf_i} \right]
        + \mbox{const} \nonumber \\
        &\stackrel{(c)}{=} D\left( b_i \| p_i(\cdot|\rbf_i,\gammabf_i) \right) + \mbox{const},
        \label{eq:LagBFEbi2}
\end{align}
where in step (a) we used the fact that $\betabf_i = \gammabf_i \vmult \rbf_i$, and in steps
(b) and (c) we used the definitions of KL divergence and $p_i(\cdot)$ from \eqref{eq:pirgam}.
Thus, the minimization in \eqref{eq:bifixlag} yields \eqref{eq:bifix}.

The maximization over $q$ in \eqref{eq:qfixlag} is computed
similarly.  Removing the terms that do not depend on $q$,
\begin{align}
    \lefteqn{ L(b_1,b_2,q,\betabf,\gammabf) }\nonumber\\
        &= H(q) + \sum_{i=1}^2 \betabf_i\tran\Exp(\xbf|q)
        - \frac{1}{2}\sum_{i=1}^2\Exp\left[ \|\xbf\|^2_{\gammabf_i} \right] + \mbox{const} \\
        &\stackrel{(a)}{=}
            H(q) + (\etabf\vmult\xbfhat)\tran\Exp(\xbf|q)  
        - \frac{1}{2}\Exp\left[ \|\xbf\|^2_{\etabf} \right]   + \mbox{const}  \nonumber \\
        &\stackrel{(b)}{=}
            H(q) - \frac{1}{2}\Exp\left[ \|\xbf-\xbfhat\|^2_{\etabf} \right]   + \mbox{const}  \nonumber \\
        &\stackrel{(c)}{=} -D(q\|\widehat{q} \,) + \mbox{const},
        \label{eq:LagBFEqi}
\end{align}
where step (a) uses the facts that $\gammabf_1+\gammabf_2=\etabf$ and
\[
    \betabf_1 + \betabf_2 = \gammabf_1 \vmult \rbf_1+\gammabf_2 \vmult \rbf_2 = \etabf \vmult \xbfhat,
\]
step (b) follows by completing the square, and step (c) uses the density
\[
    \widehat{q}(\xbf) \propto \exp\left[ \frac{1}{2}\|\xbf-\xbfhat\|^2_{\etabf} \right].
\]
Hence, the maximum in \eqref{eq:qfixlag} is given by $q=\widehat{q}$, which matches \eqref{eq:qfix}.

Also, from lines~\ref{line:xi} and \ref{line:etai} of Algorithm~\ref{algo:GEC}
and \eqref{eq:gimmse} and \eqref{eq:gimmseDeriv},
\[
    \xbfhat = \Exp(\xbf|b_i), \quad \mathbf{1}\vdiv\etabf = \dbf(\Cov(\xbf|b_i)).
\]
Since $\widehat{q}$ is Gaussian, its mean and covariance matrix are
\[
    \Exp(\xbf|q) = \xbfhat, \quad \Cov(\xbf|q) = \Diag(\mathbf{1}\vdiv\etabf).
\]
This proves that the densities satisfy the moment matching constraints \eqref{eq:MMcon}.

\section{Proof of Theorem~\ref{thm:convFO} } \label{sec:convFOPf}

Let $\gammabf = \gammabf_1+\gammabf_2$ and $\Gammabf = \Diag(\gammabf)$.
Define the scaled variables $\vbf_i = \Gammabf^{-1/2}\betabf_i$.
Since $\gammabf > 0$, it suffices to prove the convergence of $\vbf_i$.
We can rewrite the update \eqref{eq:betaFO} as
\beq \label{eq:vup}
    \vbf_j \gets F_i(\vbf_i) :=
    \Gammabf^{1/2}\gt_i(\Gammabf^{1/2}\vbf_i,\gammabf_i) - \vbf_i.
\eeq
So, we have that the updates are given by the recursion
\[
    \vbf_2 = F_1(\vbf_2) = F_1(F_2(\vbf_2)).
\]
If $\Jbf_i(\vbf_i) = \partial F_i(\vbf_i)/\partial \vbf_i$ is the Jacobian of transformation,
then, by the chain rule, the Jacobian of the composition is $\partial (F_1\circ F_2)/\partial \vbf_1
= \Jbf_1 \Jbf_2$.  A standard contraction mapping result \cite{Vidyasagar:78} shows that if,
for some $\rho < 1$,
\[
    \|\Jbf_1(\vbf_1) \Jbf_2(\vbf_2)\| \leq \rho ,
\]
then $\vbf_i$ converges linearly to a unique fixed point.

So, we need to characterize the norms of the Jacobians.
First, the Jacobian of the update function in \eqref{eq:vup} is
\beq
    \Jbf_i(\vbf_i) = \Gammabf^{1/2}\Qbftilde_i\Gammabf^{1/2} - \Ibf,
    \quad
    \Qbftilde_i = \frac{\partial \gt_i(\betabf_i,\gammabf_i)}{\partial \betabf_i}
\eeq
Using Assumption~\ref{as:firstOrder}, $\Qbftilde_i$ is symmetric and
hence so is $\Jbf_i$.  Also, using \eqref{eq:Qderivbnd} and the fact that
$\etabf=\gammabf_1+\gammabf_2$,
\begin{align} \label{eq:Jibndup}
    \Jbf_i(\vbf_i)
    &\leq \Diag\left[ (\gammabf_1+\gammabf_2)\vdiv(c_{i2}\mathbf{1} + \gammabf_i) - \mathbf{1} \right] \\
    &= \Diag\left[ (\gammabf_j-c_{i2}\mathbf{1})\vdiv(\gammabf_i+c_{i2}\mathbf{1}) \right],
\end{align}
for $(i,j)=(1,2)$ or $(2,1)$.
Similarly,
\begin{align} \label{eq:Jibnddown}
    \Jbf_i(\vbf_i)
    &\geq \Diag\left[ (\gammabf_1+\gammabf_2)\vdiv(c_{i1}\mathbf{1} + \gammabf_i) - \mathbf{1} \right] \\
    &= \Diag\left[ (\gammabf_j-c_{i1}\mathbf{1})\vdiv(\gammabf_i+c_{i1}\mathbf{1}) \right] .
\end{align}
Thus, the matrix absolute value of $\Jbf_i$ (i.e.\ from the spectral theorem,
not componentwise) satisfies
\[
    |\Jbf_i| \leq \Diag\left[
        \frac{|\gammabf_j-\qbf_i|}{|\gammabf_i+\qbf_i|}\right],
\]
where $\qbf_i$ has components $q_{in}=c_{i1}$ or $c_{i2}$.
Thus,
\[
    |\Jbf_1||\Jbf_2| \leq \Diag\left[
    \frac{|\gammabf_2-\qbf_1||\gammabf_1-\qbf_2|}{
    |\gammabf_1+\qbf_1||\gammabf_2+\qbf_2|}\right].
\]
Now for each component $n$,
\[
    |\gamma_{2n}-q_{1n}||\gamma_{1n}-q_{2n}|
    < |\gamma_{1n}+q_{1n}||\gamma_{2n}+q_{2n}|,
\]
since all the terms are positive.  Since this is true for all $n$,
there must exist a $\rho \in [0,1)$ such that,
\[
    |\Jbf_1||\Jbf_2| \leq \rho \Ibf.
\]
Note that the value of $\rho$ can be selected independently of $\vbf_i$.
Therefore, the norm of the Jacobian product satisfies
\[
    \| \Jbf_1\Jbf_2 \| \leq \| |\Jbf_1||\Jbf_2| \| \leq \rho
\]
for all $\vbf_i$.
Hence, the mapping is a contraction.

\section{Proof of Theorem~\ref{thm:convMAP}} \label{sec:convMAPPf}
We start by proving part (a).
First, recall that $(\rbf_1^0,\gammabf_1^0)$ were constructed by running
Algorithm~\ref{algo:GEC} to convergence under \emph{fixed} $\gammabf_i^0>0$
and $\etabf_i^0=\gammabf_1^0+\gammabf_2^0$ for $i=1,2$.
Theorem~\ref{thm:mapFix} studied this recursion and showed that it
yields final values ``$\rbf_i^0$'' of $\rbf_i$ for which the corresponding
estimates $\xbfhat_i^0:=\gbf_i(\rbf_i^0,\gammabf_i^0)$ satisfy
$\xbfhat_1^0=\xbfhat_2^0=\xbfhat$, where $\xbfhat$ is a local minima of \eqref{eq:xmap}.
Since we have assumed that $f_i(\xbf)$ are strictly convex, $\xbfhat$ is
the unique global minimizer.
Theorem~\ref{thm:convMAP} then considers what happens when Algorithm~\ref{algo:GEC}
is run from the initialization $(\rbf_1^0,\gammabf_1^0)$ \emph{without} holding
$\gammabf_i,\etabf_i$ fixed.

We now show that $\xbfhat_1=\xbfhat_2=\xbfhat$ for all iterations.
We prove this by induction.
Suppose that $\xbfhat_i=\xbfhat$ (which we know holds for $i=1$ during the first iteration due to the construction of the initialization $\xbfhat_1^0$).
Since
$\xbfhat_i = \gbf_i(\rbf_i,\gammabf_i)$ with $\gbf_i(\cdot,\cdot)$ being the minimizer
in \eqref{eq:gimap}, $\xbfhat=\xbfhat_i$ must satisfy the first-order condition
\beq \label{eq:xideriv}
    \nabla f_i(\xbfhat) + \gammabf_i \vmult (\xbfhat - \rbf_i) = 0.
\eeq
Therefore,
\begin{align*}
    \MoveEqLeft \nabla f_j(\xbfhat) + \gammabf_j \vmult (\xbfhat - \rbf_j) \\
    & \stackrel{(a)}{=} -\nabla f_i(\xbfhat) + \gammabf_j \vmult (\xbfhat - \rbf_j) \\
    &\stackrel{(b)}{=} -\nabla f_i(\xbfhat) + (\etabf_i-\gammabf_i) \vmult \xbfhat - \gammabf_j \vmult \rbf_j \\
    &\stackrel{(c)}{=} -\gammabf_i \vmult \rbf_i + \etabf_i \vmult \xbfhat - \gammabf_j \vmult \rbf_j
    \stackrel{(d)}{=} 0,
\end{align*}
where (a) follows from the fact that $\xbfhat$ is the minimizer of
\eqref{eq:xmap} and so $\nabla f_i(\xbfhat) + \nabla f_j(\xbfhat) = 0$;
(b) follows from line~\ref{line:gamj} of Algorithm~\ref{algo:GEC};
(c) follows from the induction hypothesis \eqref{eq:xideriv};  and
(d) follows from line~\ref{line:rj}.
Hence $\xbfhat$ satisfies the first-order minimization conditions for
$\xbfhat_j = \gbf_j(\rbf_j,\gammabf_j)$, so $\xbfhat_j = \xbfhat$.
This proves part (a).

We now turn to part (b).  We will prove this part for the vector-valued
diagonalization operator \eqref{eq:Dvec}.  The proof for the
uniform diagonalization operator \eqref{eq:Dunif} is similar, but easier.
Now, from part (a), we can assume that $\xbfhat_i=\xbfhat$
for all iterations.
From \eqref{eq:gimapDeriv}, we see that $\Qbf_i$ in line~\ref{line:Qi} is given by
\beq \label{eq:Qimappf}
    \Qbf_i = (\Pbf_i + \Gammabf_i)^{-1}, \quad \Pbf_i = \Hessian{f_i}(\xbfhat),
\eeq
where $\Hessian{f_i}(\xbfhat)$ is the Hessian.
Hence, from line~\ref{line:gamj}, the updates of the second-order terms are given by
\beq \label{eq:gamjmap}
    \gammabf_j \gets G^i(\gammabf_i)
    := \mathbf{1}\vdiv\diag\left( (\Pbf_i + \Gammabf_i)^{-1}\right) - \gammabf_i.
\eeq
Note that, since the penalty functions $f_i(\xbf)$ satisfy the assumptions in
Lemma~\ref{lem:convFn}, we have $\Pbf_i := \Hessian{f_i}(\xbfhat) > 0$.
Also, we can write the update on $\gammabf_1$ as
\beq \label{eq:gamcomp}
     \gammabf_1 \gets G^2 \circ G^1(\gammabf_1),
\eeq
where $G^2 \circ G^1$ is the composition map.

Now, to analyze this update, consider a general map of the form
\beq \label{eq:gammap}
    G(\gammabf) := \mathbf{1}\vdiv\diag\left( (\Pbf + \Gammabf)^{-1} \right) - \gammabf,
\eeq
where, as before, we use the notation $\Gammabf = \diag(\gammabf)$ and $\Pbf > 0$.
We prove four properties of a map of this form:  For any $\gammabf > 0$,
\begin{enumerate}[(i)]
\item Non-negative:  $G(\gammabf) \geq 0$;
\item Non-decreasing:  $G(\gammabf') \geq G(\gammabf)$ when $\gammabf' > \gammabf$;
\item Sub-multiplicative:  For any $\alpha > 1$, $G(\alpha \gammabf) \leq \alpha G(\gammabf)$;
\item Bounded:  There exists a $M$ such that $G(\gammabf) \leq M$.
\end{enumerate}

If we can show that $G(\gammabf)$ in \eqref{eq:gammap}
satisfies these four properties, then so does $G^i(\gammabf_i)$ in \eqref{eq:gamjmap}
and hence their composition $G^1 \circ G^2$.  It is then shown in \cite{yates:95} that
the update in \eqref{eq:gamcomp} must converge to a unique fixed point.  So, we must simply
prove that $G(\gammabf)$ in \eqref{eq:gammap} satisfies properties (i) to (iv).

For property (i), observe that the components of $G(\gammabf)$ are given by
\[
    G_i(\gammabf) = \left[ (\Pbf + \Gammabf)^{-1} \right]^{-1}_{ii} - \gamma_i.
\]
Since $\Pbf \geq 0$,
\[
    G_i(\gammabf) \geq \left[ \Gammabf^{-1} \right]^{-1}_{ii} - \gamma_i = 0.
\]
So $G_i(\gammabf) \geq 0$.  This proves property (i).
Next, we prove that it is increasing.
Let
\[
    \Sbf = (\Pbf + \Gammabf)^{-1},
\]
so that we can write $G_i(\gammabf)$ as
\[
    G_i(\gammabf) = S_{ii}^{-1} - \gamma_i.
\]
Then
\begin{align*}
    \lefteqn{ \frac{\partial G_i(\gammabf)}{\partial \gamma_j}
        = -S_{ii}^{-2} \frac{\partial}{\partial \gamma_j} \ebf_i\tran\Sbf\ebf_i - \delta_{i-j} } \\
        &= -S_{ii}^{-2} \ebf_i\tran \left[ \frac{\partial}{\partial \gamma_j} (\Pbf+\Gammabf)^{-1} \right] \ebf_i - \delta_{i-j} \\
        &= S_{ii}^{-2} \ebf_i\tran (\Pbf+\Gammabf)^{-1} \left[ \frac{\partial}{\partial \gamma_j} (\Pbf+\Gammabf)\right] (\Pbf+\Gammabf)^{-1} \ebf_i - \delta_{i-j} \\
        &= S_{ii}^{-2} \ebf_i\tran \Sbf \left[ \ebf_j \ebf_j\tran \right] \Sbf \ebf_i - \delta_{i-j} \\
        &= S_{ij}^2 / S_{ii}^2 - \delta_{i-j} . \\
\end{align*}
For $i\neq j$, we see that
\[
     \frac{\partial G_i(\gammabf)}{\partial \gamma_j} = \frac{S_{ij}^2}{S_{ii}^2} \geq  0,
\]
and for $i = j$, we have
\[
    \frac{\partial G_i(\gammabf)}{\partial \gamma_j} = \frac{S_{ii}^2}{S_{ii}^2} - 1 = 0.
\]
Hence, the function is non-decreasing, which proves property (ii).
Next, we need to show that is sub-multiplicative.
Suppose $\alpha > 1$.  Then
\begin{align*}
    G(\alpha \gammabf)
    &= \mathbf{1}\vdiv\diag\left( (\Pbf + \alpha \Gammabf)^{-1} \right) - \alpha \gammabf \\
    &\leq \mathbf{1}\vdiv\diag\left( (\alpha \Qbf + \alpha \Gammabf)^{-1} \right) - \alpha \gammabf
    = \alpha G(\gammabf),
\end{align*}
which proves property (iii).
Lastly, we need to show it is bounded above.  First notice that we can write
\begin{align*}
G(\gammabf)
&= \diag(\Ibf - \Gammabf(\Pbf+\Gammabf)^{-1}) \vdiv \diag( (\Pbf+\Gammabf)^{-1}) \\
&= \diag(\Pbf(\Pbf+\Gammabf)^{-1}) \vdiv \diag( (\Pbf+\Gammabf)^{-1}) \\
&= \diag(\Pbf(\Pbf+\Gammabf)^{-1}\Gammabf) \vdiv \diag( (\Pbf+\Gammabf)^{-1}\Gammabf) \\
&= \diag((\Gammabf^{-1}+\Pbf^{-1})^{-1}) \vdiv \diag( (\Gammabf^{-1}\Pbf+\Ibf)^{-1}) .
\end{align*}
Then as $\gammabf \arr \infty$, we have
\begin{align*}
    G(\gammabf)
    &\rightarrow \diag((\Pbf^{-1})^{-1}) \vdiv \diag( \Ibf^{-1}) = \diag(\Pbf) .
\end{align*}
If $\Pbf$ is the Hessian of a penalty function that satisfies the assumptions
in Lemma~\ref{lem:convFn}, then $\diag(\Pbf)$ is be bounded from above,
implying that $G(\gammabf)$ is also bounded from above.
This proves properties (i) to (iv) above.

\section{Proof of Theorem \ref{thm:fixr}} \label{sec:fixrPf}

For the penalty $f_2(\xbf)$ in \eqref{eq:fawgn},
the belief estimate $b_2(\xbf)$ in \eqref{eq:bifix} is Gaussian with covariance matrix
\[
    \Cov(\xbf|b_2) = \left[ \gamma_w\Abf^*\Abf + \gamma_2\Ibf \right]^{-1}
    = \left[\Ybf + \gamma_2\Ibf\right]^{-1},
\]
where $\Ybf = \gamma_w\Abf\tran\Abf$.
From \eqref{eq:gimapDeriv} and line~\ref{line:etai} in Algorithm~\ref{algo:GEC},
$\Qbf_2 = \Cov(\xbf|b_2)$. Under the uniform diagonalization operator \eqref{eq:Dunif},
$\etabf$ has identical components $\eta$ such that
\[
    \eta^{-1} = \frac{1}{N}\Tr(\Qbf_2) = S_{\Ybf}(-\gamma_2),
\]
where $S_{\Ybf}(\omega)$ is the Stieltjes transform \eqref{eq:stieltjes}.
Hence,
\[
    \gamma_2 = -S_{\Ybf}^{-1}(\eta^{-1}).
\]
Also, $\gammabf_i$ has identical components $\gamma_i$ for each $i=1,2$.
Thus from line~\ref{line:gamj} of Algorithm~\ref{algo:GEC},
\[
    \gamma_1 = \eta - \gamma_2 =   \eta + S_{\Ybf}^{-1}(\eta^{-1})
    = R_{\Ybf}(-\eta^{-1}).
\]
This proves the first equation in \eqref{eq:fixr}.  The second equation
is exactly \eqref{eq:etafixr}.

\section{Proof of Theorem ~\ref{thm:admm}} \label{sec:admmPf}
To prove \eqref{eq:x1admm},
\begin{align}
    \xbfhat_1^k &\stackrel{(a)}{=} \argmin_{\xbf_1} f_1(\xbf_1) + \frac{\gamma}{2}\|\xbf_1-\rbf_1^k\|^2 \nonumber \\
    &\stackrel{(b)}{=}   \argmin_{\xbf_1} f_1(\xbf_1) +
    \frac{\gamma}{2}\|\xbf_1-\xbfhat_2^k-\gamma^{-1}\sbf_1^k\|^2 \nonumber \\
    &\stackrel{(c)}{=} \argmin_{\xbf_1} f_1(\xbf_1) + \frac{\gamma}{2}\|\xbf_1-\xbfhat_2^k\|^2
    +(\sbf_1^k)\tran\xbf_1 \nonumber \\
    &\stackrel{(d)}{=} \argmin_{\xbf_1} L(\xbf_1,\xbfhat_2,\sbf_1^k),
\end{align}
where (a) follows from $\xbfhat_1^k = \gbf_1(\rbf_1^k,\gamma)$ and
the definition of the MAP estimation function in \eqref{eq:gimap};
(b) follows from the definition of $\sbf_1^k$ in \eqref{eq:sadmm};
(c) follows by expanding the squares and (d) follows from the augmented
Lagrangian in \eqref{eq:Ladmm}.  The proof of \eqref{eq:x2admm} is similar.

To prove \eqref{eq:s1admm}, first observe from the update of $\rbf_j$ in
line \eqref{line:rj} of Algorithm~\ref{algo:GEC} and the fact that
$\eta = 2\gamma$, we have
\beq \label{eq:r2admm}
    \gamma \rbf_2^k = 2\gamma \xbfhat_1^k - \gamma \rbf_1^k.
\eeq
Therefore,
\begin{align}
    \MoveEqLeft \sbf_2^k \stackrel{(a)}{=}
        \gamma(\rbf_2^k-\xbfhat_1^k) \nonumber \\
        & \stackrel{(b)}{=}  \gamma (\xbfhat_1^k-\rbf_1^k)
        = \sbf_1^k + \gamma(\xbfhat_1^k-\xbfhat_2^k),
\end{align}
where (a) follows from the definition of $\sbf_2^k$ in \eqref{eq:sadmm};
(b) follows from \eqref{eq:r2admm} and (c) follows from
\eqref{eq:sadmm}.  The proof of \eqref{eq:s2admm} is similar.

}

\bibliographystyle{IEEEtran}
\bibliography{bibl}

\newcommand{\SortNoop}[1]{}
\begin{thebibliography}{10}
\providecommand{\url}[1]{#1}
\csname url@samestyle\endcsname
\providecommand{\newblock}{\relax}
\providecommand{\bibinfo}[2]{#2}
\providecommand{\BIBentrySTDinterwordspacing}{\spaceskip=0pt\relax}
\providecommand{\BIBentryALTinterwordstretchfactor}{4}
\providecommand{\BIBentryALTinterwordspacing}{\spaceskip=\fontdimen2\font plus
\BIBentryALTinterwordstretchfactor\fontdimen3\font minus
  \fontdimen4\font\relax}
\providecommand{\BIBforeignlanguage}[2]{{%
\expandafter\ifx\csname l@#1\endcsname\relax
\typeout{** WARNING: IEEEtran.bst: No hyphenation pattern has been}%
\typeout{** loaded for the language `#1'. Using the pattern for}%
\typeout{** the default language instead.}%
\else
\language=\csname l@#1\endcsname
\fi
#2}}
\providecommand{\BIBdecl}{\relax}
\BIBdecl

\bibitem{ChamDLL:98}
A.~Chambolle, R.~A. DeVore, N.~Y. Lee, and B.~J. Lucier, ``Nonlinear wavelet
  image processing: Variational problems, compression, and noise removal
  through wavelet shrinkage,'' \emph{IEEE Trans. Image Process.}, vol.~7,
  no.~3, pp. 319--335, Mar. 1998.

\bibitem{DaubechiesDM:04}
I.~Daubechies, M.~Defrise, and C.~D. Mol, ``An iterative thresholding algorithm
  for linear inverse problems with a sparsity constraint,'' \emph{Commun. Pure
  Appl. Math.}, vol.~57, no.~11, pp. 1413--1457, Nov. 2004.

\bibitem{WrightNF:09}
S.~J. Wright, R.~D. Nowak, and M.~Figueiredo, ``Sparse reconstruction by
  separable approximation,'' \emph{IEEE Trans. Signal Process.}, vol.~57,
  no.~7, pp. 2479--2493, Jul. 2009.

\bibitem{BeckTeb:09}
A.~Beck and M.~Teboulle, ``A fast iterative shrinkage-thresholding algorithm
  for linear inverse problem,'' \emph{SIAM J.\ Imag.\ Sci.}, vol.~2, no.~1, pp.
  183–--202, 2009.

\bibitem{Nesterov:07}
Y.~E. Nesterov, ``Gradient methods for minimizing composite objective
  function,'' \emph{CORE Report}, 2007.

\bibitem{BioDFig:07}
J.~Bioucas-Dias and M.~Figueiredo, ``A new {TwIST}: Two-step iterative
  shrinkage/thresholding algorithms for image restoration,'' \emph{IEEE Trans.
  Image Process.}, vol.~16, no.~12, pp. 2992 -- 3004, Dec. 2007.

\bibitem{BoydPCPE:09}
S.~Boyd, N.~Parikh, E.~Chu, B.~Peleato, and J.~Eckstein, ``Distributed
  optimization and statistical learning via the alternating direction method of
  multipliers,'' \emph{Found. Trends Mach. Learn.}, vol.~3, pp. 1--122, 2010.

\bibitem{Esser:JIS:10}
E.~Esser, X.~Zhang, and T.~F. Chan, ``A general framework for a class of first
  order primal-dual algorithms for convex optimization in imaging science,''
  \emph{SIAM J. Imaging Sci.}, vol.~3, no.~4, pp. 1015--1046, 2010.

\bibitem{Chambolle:JMIV:11}
A.~Chambolle and T.~Pock, ``A first-order primal-dual algorithm for convex
  problems with applications to imaging,'' \emph{J. Math. Imaging Vis.},
  vol.~40, pp. 120--145, 2011.

\bibitem{He:JIS:12}
B.~He and X.~Yuan, ``Convergence analysis of primal-dual algorithms for a
  saddle-point problem: From contraction perspective,'' \emph{SIAM J. Imaging
  Sci.}, vol.~5, no.~1, pp. 119--149, 2012.

\bibitem{pereyra2016stoch}
M.~Pereyra, P.~Schniter, E.~Chouzenoux, J.-C. Pesquet, J.-Y. Tourneret,
  A.~Hero, and S.~McLaughlin, ``A survey of stochastic simulation and
  optimization methods in signal processing,'' \emph{IEEE J. Sel. Topics Signal
  Process.}, to appear 2016 (see also arXiv:1505.00273).

\bibitem{seeger2011fast}
M.~W. Seeger and H.~Nickisch, ``Fast convergent algorithms for expectation
  propagation approximate bayesian inference,'' in \emph{International
  Conference on Artificial Intelligence and Statistics}, 2011, pp. 652--660.

\bibitem{rangan2015admm}
S.~Rangan, A.~K. Fletcher, P.~Schniter, and U.~S. Kamilov, ``Inference for
  generalized linear models via alternating directions and {B}ethe free energy
  minimization,'' in \emph{Proc.\ IEEE ISIT}, 2015, pp. 1640 -- 1644.

\bibitem{Pearl:88}
J.~Pearl, \emph{{Probabilistic Reasoning in Intelligent Systems: Networks of
  Plausible Inference}}.\hskip 1em plus 0.5em minus 0.4em\relax San Mateo, CA:
  Morgan Kaufmann Publ., 1988.

\bibitem{YedidiaFW:03}
J.~S. Yedidia, W.~T. Freeman, and Y.~Weiss, ``Understanding belief propagation
  and its generalizations,'' in \emph{Exploring Artificial Intelligence in the
  New Millennium}.\hskip 1em plus 0.5em minus 0.4em\relax San Francisco, CA:
  Morgan Kaufmann Publishers, 2003, pp. 239--269.

\bibitem{Minka:01}
T.~P. Minka, ``A family of algorithms for approximate {B}ayesian inference,''
  Ph.D. dissertation, Massachusetts Institute of Technology, Cambridge, MA,
  2001.

\bibitem{opper2004expectation}
M.~Opper and O.~Winther, ``Expectation consistent free energies for approximate
  inference,'' in \emph{Proc.\ NIPS}, 2004, pp. 1001--1008.

\bibitem{seeger2005expectation}
M.~Seeger, ``Expectation propagation for exponential families,''
  \emph{EPFL-REPORT-161464}, 2005.

\bibitem{DonohoMM:09}
D.~L. Donoho, A.~Maleki, and A.~Montanari, ``Message-passing algorithms for
  compressed sensing,'' \emph{Proc. Nat. Acad. Sci.}, vol. 106, no.~45, pp.
  18\,914--18\,919, Nov. 2009.

\bibitem{Rangan:11-ISIT}
S.~Rangan, ``Generalized approximate message passing for estimation with random
  linear mixing,'' in \emph{Proc. IEEE Int. Symp. Inform. Theory}, Saint
  Petersburg, Russia, Jul.--Aug. 2011, pp. 2174--2178.

\bibitem{DonohoMM:10-ITW1}
D.~L. Donoho, A.~Maleki, and A.~Montanari, ``Message passing algorithms for
  compressed sensing {I}: motivation and construction,'' in \emph{Proc.\ Info.\
  Theory Workshop}, Jan. 2010.

\bibitem{DonohoMM:10-ITW2}
------, ``Message passing algorithms for compressed sensing {II}: analysis and
  validation,'' in \emph{Proc.\ Info.\ Theory Workshop}, Jan. 2010.

\bibitem{Rangan:10-CISS}
S.~Rangan, ``Estimation with random linear mixing, belief propagation and
  compressed sensing,'' in \emph{Proc. Conf. on Inform. Sci. \& Sys.},
  Princeton, NJ, Mar. 2010, pp. 1--6.

\bibitem{BayatiM:11}
M.~Bayati and A.~Montanari, ``The dynamics of message passing on dense graphs,
  with applications to compressed sensing,'' \emph{IEEE Trans. Inform. Theory},
  vol.~57, no.~2, pp. 764--785, Feb. 2011.

\bibitem{JavMon:12-arXiv}
A.~Javanmard and A.~Montanari, ``State evolution for general approximate
  message passing algorithms, with applications to spatial coupling,''
  \emph{arXiv:1211.5164 [math.PR].}, Nov. 2012.

\bibitem{fletcher2014scalable}
A.~K. Fletcher and S.~Rangan, ``Scalable inference for neuronal connectivity
  from calcium imaging,'' in \emph{Proc.\ Neural Information Processing
  Systems}, 2014, pp. 2843--2851.

\bibitem{FletcherRVB:11}
A.~K. Fletcher, S.~Rangan, L.~Varshney, and A.~Bhargava, ``Neural
  reconstruction with approximate message passing {(NeuRAMP)},'' in \emph{Proc.
  Neural Information Process. Syst.}, Granada, Spain, Dec. 2011.

\bibitem{Schniter:11}
P.~Schniter, ``A message-passing receiver for {BICM-OFDM} over unknown
  clustered-sparse channels,'' vol.~5, no.~8, pp. 1462--1474, Dec. 2011.

\bibitem{SomS:12}
S.~Som and P.~Schniter, ``Compressive imaging using approximate message passing
  and a {M}arkov-tree prior,'' \emph{IEEE Trans. Signal Process.}, vol.~60,
  no.~7, pp. 3439--3448, Jul. 2012.

\bibitem{parker2013bilinear2}
J.~Parker, P.~Schniter, and V.~Cevher, ``Bilinear generalized approximate
  message passing---{Part II: A}pplications,'' \emph{IEEE Trans.\ Signal
  Processing}, vol.~62, no.~22, pp. 5854--5867, 2013.

\bibitem{Schniter:TSP:15}
P.~Schniter and S.~Rangan, ``Compressive phase retrieval via generalized
  approximate message passing,'' \emph{IEEE Trans. Signal Process.}, vol.~63,
  no.~4, pp. 1043--1055, 2015.

\bibitem{vila2015hyperspectral}
J.~Vila, P.~Schniter, and J.~Meola, ``Hyperspectral image unmixing via turbo
  bilinear approximate message passing,'' \emph{IEEE Trans. Comp. Imag.},
  vol.~1, no.~3, pp. 143--158, 2015.

\bibitem{Ziniel:TSP:15}
J.~Ziniel, P.~Schniter, and P.~Sederberg, ``Binary linear classification and
  feature selection via generalized approximate message passing,'' \emph{IEEE
  Trans. Signal Process.}, vol.~63, no.~8, pp. 2020--2032, 2015.

\bibitem{opper2000gaussian}
M.~Opper and O.~Winther, ``Gaussian processes for classification: Mean-field
  algorithms,'' \emph{Neural Computation}, vol.~12, no.~11, pp. 2655--2684,
  2000.

\bibitem{opper2001adaptive}
------, ``Adaptive and self-averaging {T}houless-{A}nderson-{P}almer mean-field
  theory for probabilistic modeling,'' \emph{Physical Review E}, vol.~64,
  no.~5, p. 056131, 2001.

\bibitem{RanSchFle:14-ISIT}
S.~Rangan, P.~Schniter, and A.~Fletcher, ``On the convergence of approximate
  message passing with arbitrary matrices,'' in \emph{Proc.\ ISIT}, Jul. 2014,
  pp. 236--240.

\bibitem{Caltagirone:14-ISIT}
F.~Caltagirone, L.~Zdeborov{\'a}, and F.~Krzakala, ``On convergence of
  approximate message passing,'' in \emph{Proc.\ ISIT}, Jul. 2014, pp.
  1812--1816.

\bibitem{Krzakala:14-ISITbethe}
F.~Krzakala, A.~Manoel, E.~W. Tramel, and L.~Zdeborov{\'a}, ``Variational free
  energies for compressed sensing,'' in \emph{Proc.\ ISIT}, Jul. 2014, pp.
  1499--1503.

\bibitem{Seeger:08}
M.~Seeger, ``{B}ayesian inference and optimal design for the sparse linear
  model,'' \emph{J. Machine Learning Research}, vol.~9, pp. 759--813, Sep.
  2008.

\bibitem{Vila:ICASSP:15}
J.~Vila, P.~Schniter, S.~Rangan, F.~Krzakala, and L.~Zdeborov{\'a}, ``Adaptive
  damping and mean removal for the generalized approximate message passing
  algorithm,'' in \emph{Proc.\ IEEE ICASSP}, 2015, to appear.

\bibitem{manoel2014swamp}
A.~Manoel, F.~Krzakala, E.~W. Tramel, and L.~Zdeborov{\'a}, ``Sparse estimation
  with the swept approximated message-passing algorithm,''
  \emph{arXiv:1406.4311}, Jun. 2014.

\bibitem{nickisch2009convex}
H.~Nickisch and M.~W. Seeger, ``Convex variational {B}ayesian inference for
  large scale generalized linear models,'' in \emph{Proc.\ ICML}, 2009, pp.
  761--768.

\bibitem{tulino2013support}
A.~M. Tulino, G.~Caire, S.~Verdu, and S.~Shamai, ``Support recovery with
  sparsely sampled free random matrices,'' \emph{IEEE Transactions on
  Information Theory}, vol.~59, no.~7, pp. 4243--4271, 2013.

\bibitem{TanakaO:05}
T.~Tanaka and M.~Okada, ``Approximate belief propagation, density evolution,
  and neurodynamics for {CDMA} multiuser detection,'' \emph{IEEE Trans. Inform.
  Theory}, vol.~51, no.~2, pp. 700--706, Feb. 2005.

\bibitem{GuoV:05}
D.~Guo and S.~Verd{\'u}, ``Randomly spread {CDMA}: Asymptotics via statistical
  physics,'' \emph{IEEE Trans. Inform. Theory}, vol.~51, no.~6, pp. 1983--2010,
  Jun. 2005.

\bibitem{RanganFG:12-IT}
S.~Rangan, A.~Fletcher, and V.~K. Goyal, ``Asymptotic analysis of {MAP}
  estimation via the replica method and applications to compressed sensing,''
  \emph{IEEE Trans. Inform. Theory}, vol.~58, no.~3, pp. 1902--1923, Mar. 2012.

\bibitem{vehkapera2014analysis}
M.~Vehkapera, Y.~Kabashima, and S.~Chatterjee, ``Analysis of regularized ls
  reconstruction and random matrix ensembles in compressed sensing,'' in
  \emph{Proc. IEEE ISIT}, 2014, pp. 3185--3189.

\bibitem{cakmak2014samp}
B.~{\c{C}}akmak, O.~Winther, and B.~H. Fleury, ``{S-AMP}: Approximate message
  passing for general matrix ensembles,'' in \emph{Proc.\ IEEE Information
  Theory Workshop (ITW)}, 2014, pp. 192--196.

\bibitem{cakmak2015samp}
------, ``{S-AMP} for non-linear observation models,'' 2015.

\bibitem{NelWed:72}
J.~A. Nelder and R.~W.~M. Wedderburn, ``Generalized linear models,'' \emph{J.\
  Royal Stat.\ Soc. Series A}, vol. 135, pp. 370--385, 1972.

\bibitem{McCulNel:89}
P.~McCullagh and J.~A. Nelder, \emph{Generalized linear models}, 2nd~ed.\hskip
  1em plus 0.5em minus 0.4em\relax Chapman \& Hall, 1989.

\bibitem{TulinoV:04}
A.~Tulino and S.~Verd{\'u}, ``Random matrix theory and wireless
  communications,'' \emph{Found. Trends Commun. Info. Thy.}, vol.~1, pp.
  1--182, 2004.

\bibitem{Rangan:arxiv:15}
S.~Rangan, A.~K. Fletcher, P.~Schniter, and U.~S. Kamilov, ``Inference for
  generalized linear models via alternating directions and {B}ethe free energy
  minimization,'' \emph{arXiv:1501.01797}, Jan. 2015.

\bibitem{Vidyasagar:78}
M.~Vidyasagar, \emph{Nonlinear Systems Analysis}.\hskip 1em plus 0.5em minus
  0.4em\relax Englewood Cliffs, NJ: Prentice-Hall, 1978.

\bibitem{yates:95}
R.~D. Yates, ``A framework for uplink power control in cellular radio
  systems,'' \emph{IEEE J. Sel. Areas Comm.}, vol.~13, no.~7, pp. 1341--1347,
  September 1995.

\end{thebibliography}
\end{document}